\documentclass[journal]{IEEEtran}
\usepackage{amsmath,amssymb,amsfonts}
\usepackage{amsthm}
\usepackage{graphicx}
\usepackage{cite}
\usepackage{algorithmic}
\usepackage{algorithm}
\usepackage{booktabs}
\usepackage{pgfplots}
\usepackage{iftex}
\usepackage{hyperref}

\newtheorem{theorem}{Theorem}

\graphicspath{{sim/}{./}}

\title{Exact Closed-Form Feedforward Inversion for Dual-Bridge Series Resonant DC/DC Converter via State-Plane Analysis}

\author{Alex Borisevich, \href{mailto:akpc806b@gmail.com}{akpc806b@gmail.com}
}

\begin{document}

\maketitle

\begin{abstract}
This paper derives exact closed-form feedforward inversion maps for the dual-bridge series resonant converter (DB SRC) using state-plane trajectory analysis. The converter employs four modulation variables: primary duty cycle $d$, secondary shorting time $s$, phase shift $\beta$, and switching frequency $\omega$. While the established first harmonic approximation (FHA) provides frequency-independent inversion, the exact state-plane approach yields frequency-dependent inversion model that is proven algebraically identical to FHA at resonance frequency. For practical above-resonance operation, the exact inversions eliminate the 5--72\% commutation angle errors inherent in the FHA-based feedforward. The resulting controller architecture mirrors the parallel nonlinear compensation structure of the FHA-based design, with feedforward maps now operating on resonant-time quantities that naturally couple commutation and frequency control. All results are expressed in closed form suitable for real-time implementation.
\end{abstract}

\begin{IEEEkeywords}
Series resonant converter, state-plane analysis, dual active bridge, first harmonic approximation, DC-DC converter, soft switching.
\end{IEEEkeywords}

% ==============================================================
\section{Introduction}
% ==============================================================

The dual-bridge (DB) series resonant converter (SRC) offers significant advantages for high-power isolated DC/DC applications including electric vehicle chargers, battery energy storage, and DC power distribution \cite{Li2010, Krismer2005, Wang2017}. The topology features controlled full-bridge circuits on both primary and secondary sides connected through a series LC resonant tank and a transformer, enabling bidirectional power flow, voltage boost capability, and near-sinusoidal transformer current.

The modulation scheme employs four independent control variables: primary bridge duty cycle $d$, secondary bridge shorting time $s$, phase shift $\beta$ between bridges, and switching frequency $\omega$ \cite{Zhao2014}. This provides sufficient degrees of freedom to simultaneously regulate output power and optimize efficiency through soft-switching operation.

Existing analytical models for the DB SRC are predominantly based on the first harmonic approximation (FHA), which assumes purely sinusoidal tank current \cite{Borisevich2019}. While computationally simple and adequate for many operating conditions, the FHA inherently neglects higher harmonics and cannot provide exact values of commutation currents needed for ZVS analysis.

State-plane trajectory analysis, pioneered by Oruganti and Lee \cite{Oruganti1985, Oruganti1988}, provides an exact geometric method for analyzing resonant converters. The key insight is that in normalized coordinates, the LC tank trajectory under constant applied voltage is a circular arc, and the steady-state is found by stitching arcs together with periodicity constraints. This approach has been successfully applied to SRC converters with passive rectifiers \cite{Rossetto1996, Mohammadi2015a, Mohammadi2015b} and to LLC resonant converters \cite{Hu2014}.

However, the state-plane method has not been systematically applied to the DB SRC with all four modulation degrees of freedom ($d$, $s$, $\beta$, $\omega$), where both bridges are actively controlled. The secondary-side modulation introduces additional switching intervals and complicates the trajectory structure compared to passive rectifier topologies.

In this paper, we develop the complete exact state-plane analysis for the DB SRC under generalized modulation and establish its precise relationship to the FHA model from \cite{Borisevich2019}. The main contributions are:

\begin{enumerate}
\item Closed-form expressions for the exact steady-state trajectory, output transconductance, and commutation angles under arbitrary $(d, s, \beta, \omega)$ modulation.
\item A rigorous proof that the FHA is recovered as the fundamental-frequency term of the exact solution, with explicit error bounds.
\item Quantitative characterization of FHA error across the full operating range, identifying regimes where the exact model provides significant improvement.
\end{enumerate}

% ==============================================================
\section{Converter Topology and Modulation}
% ==============================================================

\subsection{Circuit Description}

The DB SRC circuit consists of two full H-bridge inverters connected through a series LC resonant tank and a transformer with turns ratio $n$, as shown in Fig.~\ref{fig:circuit}. The input bridge (switches $S_1$--$S_4$) is connected to DC voltage source $V_{in}$. The output bridge (switches $S_5$--$S_8$) is connected to the secondary winding and delivers current to the output at voltage $V_{out}$.

\begin{figure}[t]
\centering
% no generating script found in sim; static/imported artwork
\includegraphics[width=0.95\columnwidth]{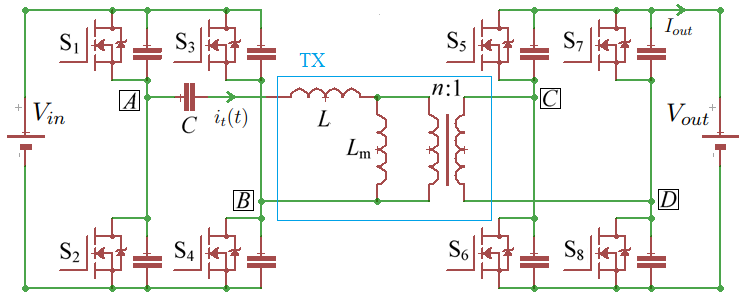}
\caption{Dual-bridge series resonant DC/DC converter circuit.}
\label{fig:circuit}
\end{figure}

The series combination of transformer leakage inductance $L$ and resonant capacitor $C$ forms the LC tank with resonant frequency $\omega_0 = 1/\sqrt{LC}$ and characteristic impedance $Z_0 = \sqrt{L/C}$. The tank current $i_L(t)$ flows through both bridges and the transformer.

\subsection{Switching Waveforms and Control Variables}

The converter is modulated by four independent parameters defined in switching-frequency angular coordinates ($t' = \omega t$):

\begin{itemize}
\item $d \in [0, \pi]$: Primary bridge on-time (duty cycle). Full square-wave corresponds to $d = \pi$.
\item $s \in [0, \pi]$: Secondary bridge short-time. The secondary is shorted at the beginning of each half-cycle for duration $s$.
\item $\beta \in [0, \pi]$: Phase shift between primary and secondary bridge switching cycles.
\item $\omega > \omega_0$: Switching angular frequency (above resonance for ZVS).
\end{itemize}

The voltage gain is defined as:
\begin{equation}\label{eq:gain}
G = \frac{n V_{out}}{V_{in}}
\end{equation}

The primary bridge produces a half-wave-symmetric voltage:
\begin{equation}\label{eq:v_in}
\frac{v_{in}(t')}{V_{in}} = \begin{cases} +1 & 0 < t' < d \\ 0 & d < t' < \pi \\ -1 & \pi < t' < \pi + d \\ 0 & \pi + d < t' < 2\pi \end{cases}
\end{equation}

The secondary bridge, referenced to its own cycle starting at phase $\beta$, produces (with $\phi = t' - \beta \bmod 2\pi$):
\begin{equation}\label{eq:v_out}
\frac{n \cdot v_{out}(t')}{V_{in}} = \begin{cases} 0 & 0 < \phi < s \\ +G & s < \phi < \pi \\ 0 & \pi < \phi < \pi + s \\ -G & \pi + s < \phi < 2\pi \end{cases}
\end{equation}

The net voltage applied to the LC tank is $v(t) = v_{in}(t) - n \cdot v_{out}(t)$.

\subsection{Timing Quantities $\sigma$ and $\delta$}

Two measurable timing quantities characterize the relationship between the tank current zero crossing and bridge switching events (Fig.~\ref{fig:waveforms}):

\begin{itemize}
\item $\sigma$: Angular time from the rising edge of $v_{in}$ to the positive-going zero crossing of $i_L$.
\item $\delta$: Angular time from the zero crossing of $i_L$ to the rising edge of $v_{out}$.
\end{itemize}

These satisfy $\sigma + \delta = \beta$. The soft-switching conditions for ZVS require $\sigma \ge 0$ (ensures negative current at primary turn-on) and $\delta \ge 0$ (ensures positive current at secondary turn-on).

\begin{figure}[t]
\centering
% no generating script found in sim; static/imported artwork
\includegraphics[width=0.85\columnwidth]{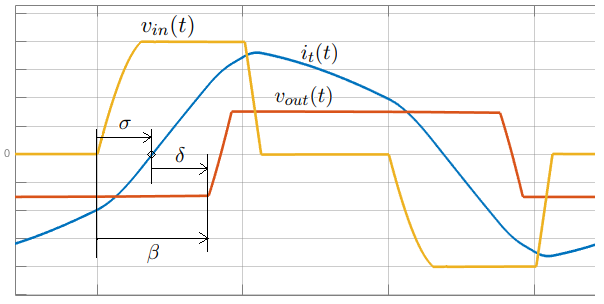}
\caption{Timing of bridge voltages relative to tank current zero crossing, defining $\sigma$ and $\delta$.}
\label{fig:waveforms}
\end{figure}

\subsection{First Harmonic Approximation Model}

The FHA model from \cite{Borisevich2019} approximates the tank current as purely sinusoidal and yields the output transconductance:

\begin{equation}\label{eq:FHA_model}
W_{FHA} = \frac{I_{out}}{V_{in}} = \frac{n}{2\pi^2} \frac{\sqrt{A^2 + B^2}}{Z(\omega)} \left(\cos(s+\delta) + \cos\delta\right)
\end{equation}

where the Fourier coefficients of the applied voltage are:
\begin{equation}\label{eq:FHA_AB}
\begin{gathered}
A = 4\sin d + 4G\sin(\beta+s) + 4G\sin\beta \\
B = 4 - 4G\cos(\beta+s) - 4G\cos\beta - 4\cos d
\end{gathered}
\end{equation}

the tank impedance is $Z(\omega) = \omega L - 1/(\omega C)$, and the timing quantities are:
\begin{equation}\label{eq:FHA_sigma}
\sigma = \text{atan2}(B, A), \quad \delta = \beta - \sigma
\end{equation}

% ==============================================================
\section{Exact State-Plane Analysis}
% ==============================================================

\subsection{Complex State Variable}

Introducing normalized coordinates $x = v_C/V_{in}$, $y = Z_0 i_L/V_{in}$, $\theta = \omega_0 t$, the LC dynamics become $dx/d\theta = y$, $dy/d\theta = c - x$ where $c$ is the normalized applied voltage (constant per interval).

Define the complex state $z = x + jy$. For constant $c$, the exact solution is:
\begin{equation}\label{eq:complex_arc}
z(\theta) = c + (z_0 - c)\,e^{-j\theta}
\end{equation}

This traces a clockwise circular arc centered at $c$ (on the real axis) with radius $|z_0 - c|$.

\subsection{Interval Decomposition}

The positive half-cycle decomposes into $N$ intervals with centers $c_k$ and switching-angle durations $\alpha_k$ (satisfying $\sum_k \alpha_k = \pi$). In resonant-time, the durations are $\Delta\theta_k = \alpha_k/F_n$ and the half-period is $\Theta = \pi/F_n$.

For the DB SRC, two principal orderings exist:

\textbf{Case 1} ($d \ge s+\beta$):
$c_k = \{1{+}G,\; 1,\; 1{-}G,\; {-}G\}$, \; $\alpha_k = \{\beta,\; s,\; d{-}s{-}\beta,\; \pi{-}d\}$

\textbf{Case 2} ($d < s+\beta$):
$c_k = \{1{+}G,\; 1,\; 0,\; {-}G\}$, \; $\alpha_k = \{\beta,\; d{-}\beta,\; s{-}d{+}\beta,\; \pi{-}s{-}\beta\}$

\subsection{Composition and Periodicity}

Each interval applies the map $z \mapsto c_k + (z - c_k)e^{-j\Delta\theta_k}$. Composing all $N$ intervals:
\begin{equation}\label{eq:composition}
z_N = e^{-j\Theta}\,z_0 + \mathcal{C}
\end{equation}

where the complex offset $\mathcal{C}$ satisfies the recurrence:
\begin{equation}\label{eq:recurrence}
\mathcal{C}_0 = 0, \quad \mathcal{C}_k = (\mathcal{C}_{k-1} + c_k)\,e^{-j\Delta\theta_k} - c_k, \quad \mathcal{C} = \mathcal{C}_N
\end{equation}

Half-wave symmetry requires $z_N = -z_0$, giving (see Appendix~A for derivation):
\begin{equation}\label{eq:z0}
z_0 = \frac{\mathcal{C}}{1 + e^{-j\Theta}} = \frac{\mathcal{C}}{2\cos(\Theta/2)}\,e^{j\Theta/2}
\end{equation}

This is the exact closed-form initial state. The explicit expansion of $\mathcal{C}$ is:
\begin{equation}\label{eq:C_expanded}
\begin{split}
\mathcal{C} &= \sum_{k=1}^N c_k(e^{-j\Delta\theta_k}-1)\prod_{m=k+1}^N e^{-j\Delta\theta_m} \\ &= -2j\sum_{k=1}^N c_k\sin\frac{\Delta\theta_k}{2}\,e^{-j(\Delta\theta_k/2+\Phi_k)}
\end{split}
\end{equation}

where $\Phi_k = \sum_{m=k+1}^N \Delta\theta_m$.

\subsection{Output Transconductance}

Using $\int y\,d\theta = \Delta x = \mathrm{Re}(\Delta z)$ per interval and the rectification sign $r_k \in \{+1, -1, 0\}$:
\begin{equation}\label{eq:W_half}
W = \frac{n}{Z_0\Theta}\sum_{k=1}^N r_k\,\mathrm{Re}(z_k - z_{k-1})
\end{equation}

Since $z_k - z_{k-1} = (z_{k-1}-c_k)(e^{-j\Delta\theta_k}-1)$ and $z_{k-1} = e^{-j\Theta_{k-1}}z_0 + \mathcal{C}_{k-1}$, the transconductance is linear in $z_0$:
\begin{equation}\label{eq:W_linear}
W = \frac{n}{Z_0\Theta}\,\mathrm{Re}(z_0\cdot\mathcal{A} + \mathcal{B})
\end{equation}

with coefficients:
\begin{equation}\label{eq:AB_def}
\begin{split}
\mathcal{A} = \sum_{k=1}^N r_k\,e^{-j\Theta_{k-1}}(e^{-j\Delta\theta_k}-1), \\ \mathcal{B} = \sum_{k=1}^N r_k\,(-\mathcal{C}_{k-1}-c_k)(e^{-j\Delta\theta_k}-1)
\end{split}
\end{equation}

Substituting $z_0$ from \eqref{eq:z0}:
\begin{equation}\label{eq:W_closed}
W = \frac{n}{Z_0\Theta}\,\mathrm{Re}\!\left(\frac{\mathcal{C}\cdot\mathcal{A}}{1+e^{-j\Theta}} + \mathcal{B}\right)
\end{equation}

This is the exact transconductance in closed form.

\subsection{Zero-Crossing Angle}

Within interval $k$, the positive-going current zero crossing ($\mathrm{Im}(z)=0$, $d[\mathrm{Im}]/d\theta > 0$) occurs at:
\begin{equation}\label{eq:sigma_exact}
\sigma_{\text{exact}} = F_n\!\left(\Theta_{k-1} + \arg(z_{k-1}-c_k) + \pi\right)
\end{equation}

selecting the interval where $\arg(z_{k-1}-c_k) + \pi \in [0, \Delta\theta_k]$.

% ==============================================================
\section{Exact Synchronous Rectification and Inversion}
% ==============================================================

The FHA-based feedforward control from \cite{Borisevich2019} inverts the model to find switching parameters $(d, s, \beta)$ from desired commutation angles $(\sigma^*, \delta^*)$. The state-plane framework provides exact closed-form inversions that generalize the FHA results. The full derivation is given in Appendix~A.

\subsection{Synchronous Rectification Condition}

The condition $\delta = 0$ (current crosses zero at the secondary switching edge) has different expressions in the two frameworks:

\begin{equation}\label{eq:sync_fha_vs_exact}
\begin{array}{r@{\;:\quad}l}
\text{FHA \cite{Borisevich2019}} & \cos\beta = G \\[4pt]
\text{Exact (state-plane)} & G\sin\dfrac{\Theta}{2} = \sin\!\left(\dfrac{\Theta}{2} - \hat\beta\right)
\end{array}
\end{equation}

(shown here for the fully-driven case $d = \pi$, $s = 0$). The FHA condition is frequency-independent, while the exact condition depends on $\Theta = \pi/F_n$. Solving the exact condition:

\begin{equation}\label{eq:beta_exact_sync}
\hat\beta = \frac{\Theta}{2} - \arcsin\!\left(G\sin\frac{\Theta}{2}\right)
\end{equation}

At resonance ($F_n = 1$, $\Theta = \pi$): $\hat\beta = \beta = \pi/2 - \arcsin(G) = \arccos(G)$, recovering the FHA result exactly. The physical interpretation: the FHA uses the fundamental component's zero crossing, while the exact condition uses the actual current zero crossing including all harmonics.

\subsection{Exact Inversion: Buck Mode}

Given desired $\sigma^*$, $\delta^*$ (with $\beta = \sigma^* + \delta^*$), the exact primary duty cycle for buck operation ($s = 0$) is:

\begin{equation}\label{eq:inv_buck}
\hat d = \hat\sigma + \frac{\Theta}{2} + \arcsin\!\left(-2G\sin\!\left(\hat\delta - \frac{\Theta}{2}\right) - \sin\!\left(\frac{\Theta}{2} - \hat\sigma\right)\right)
\end{equation}

where $\hat\sigma = \sigma^*/F_n$, $\hat\delta = \delta^*/F_n$. The FHA equivalent from \cite{Borisevich2019} is:
\begin{equation}\label{eq:inv_buck_fha}
d_{FHA} = \arccos(\cos\sigma^* - 2G\cos\delta^*) + \sigma^*
\end{equation}

At resonance ($F_n = 1$) the two are algebraically identical (using $\arccos(x) = \pi/2 - \arcsin(x)$).

\subsection{Exact Inversion: Boost Mode}

When the buck formula yields $d > \pi$, the boost mode ($d = \pi$) is used and the secondary shorting time is:

\begin{equation}\label{eq:inv_boost}
\hat s = \frac{\Theta}{2} - \hat\delta + \arcsin\!\left(\sin\!\left(\frac{\Theta}{2}-\hat\delta\right) - \frac{2}{G}\sin\!\left(\frac{\Theta}{2}-\hat\sigma\right)\right)
\end{equation}

The FHA equivalent from \cite{Borisevich2019} is:
\begin{equation}\label{eq:inv_boost_fha}
s_{FHA} = \arccos\!\left(\frac{2\cos\sigma^*}{G} - \cos\delta^*\right) - \delta^*
\end{equation}

Again identical at $F_n = 1$.

\subsection{Generalization with $s_{add}$}

For low-power operation at fixed maximum frequency, the additional shorting $s_{add}$ is introduced. The buck inversion generalizes to:
\begin{equation}\label{eq:inv_buck_sadd}
\begin{split}
\hat d = \hat\sigma + \frac{\Theta}{2} + \arcsin\!\Big(&{-}G\sin\!\left(\hat\delta{-}\tfrac{\Theta}{2}\right) \\
&{-} G\sin\!\left(\hat\delta{+}\hat s_{add}{-}\tfrac{\Theta}{2}\right) - \sin\!\left(\tfrac{\Theta}{2}{-}\hat\sigma\right)\Big)
\end{split}
\end{equation}

reducing to \eqref{eq:inv_buck} for $s_{add} = 0$.

\subsection{Aggregated Switching Parameter $q$}

Following \cite{Borisevich2019}, the buck and boost modes are combined into a single continuous variable with $s_{add}$ accommodation:
\begin{equation}\label{eq:q_def}
\begin{pmatrix} d \\ s \end{pmatrix} = \begin{cases} \begin{pmatrix} q \\ s_{add} \end{pmatrix} & q \le \pi \;\text{(buck)} \\[4pt] \begin{pmatrix} \pi \\ q - \pi \end{pmatrix} & q > \pi \;\text{(boost)} \end{cases}
\end{equation}

In buck mode, the inversion \eqref{eq:inv_buck_sadd} directly gives $q = d$ (with $s = s_{add}$ as an external parameter). In boost mode, $q = s + \pi$ where $s$ is obtained from \eqref{eq:inv_boost}. Since the boost formula \eqref{eq:inv_boost} already computes $s$ for $d = \pi$, no separate inversion is needed---the same formula applies with $q = F_n\hat s + \pi$.

Fig.~\ref{fig:q_comparison} compares $q(G)$ from the exact inversion with the FHA result across several normalized frequencies. At $F_n = 1$ the curves coincide; as $F_n$ increases the exact $q$ deviates progressively from the frequency-independent FHA curve.

\begin{figure}[t]
\centering
% this figure is generated by sim/plot_q_comparison.py script
\includegraphics[width=\columnwidth]{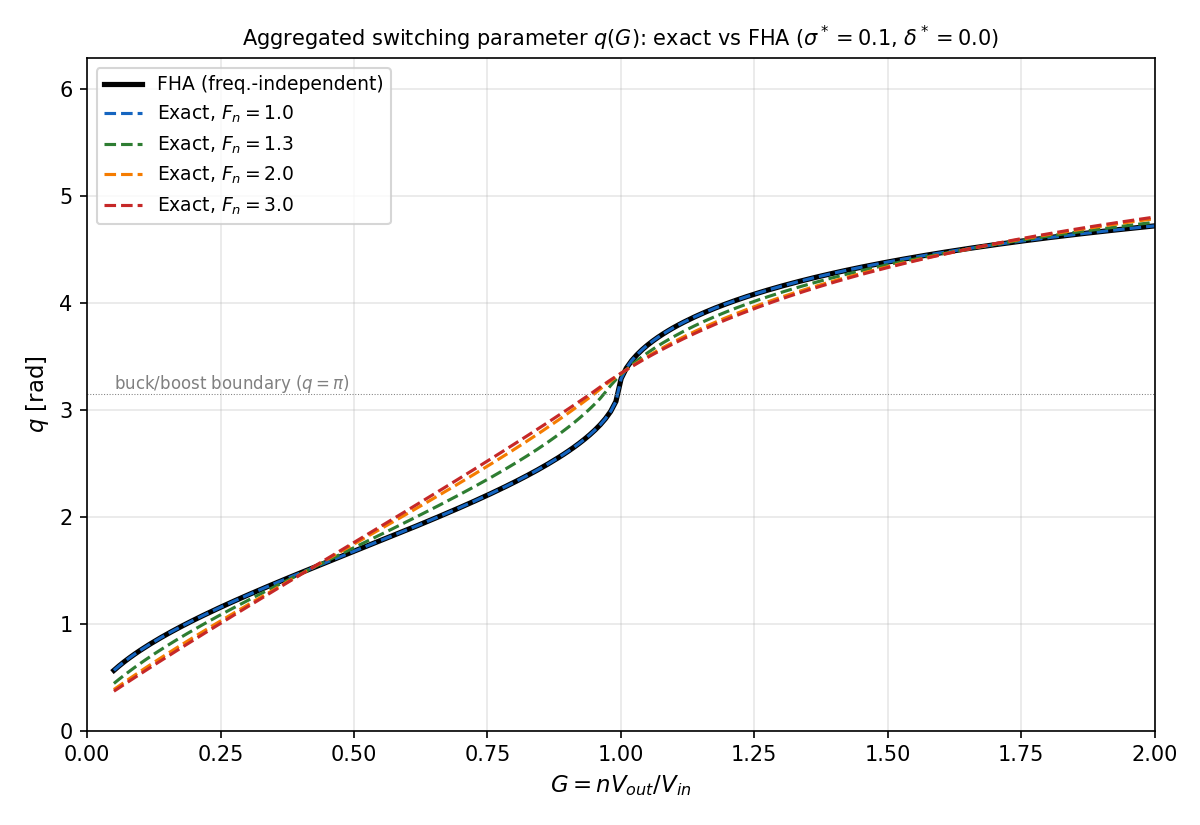}
\caption{Aggregated switching parameter $q(G)$: exact state-plane inversion (dashed) vs.\ FHA (solid black) for $\sigma^*=0.1$, $\delta^*=0$. At resonance ($F_n=1$) both coincide; deviation grows with $F_n$.}
\label{fig:q_comparison}
\end{figure}

% ==============================================================
\section{Controller Architecture}
% ==============================================================

The control system uses the same parallel nonlinear compensation structure as \cite{Borisevich2019} (Fig.~\ref{fig:Freq_Ctrl_parallel}): the feedforward maps \eqref{eq:inv_buck}--\eqref{eq:inv_boost} compute nominal $q$ and $\beta$ from $(\sigma^*, \delta^*, G, F_n)$, with PI controllers adding corrections for robustness. The outer loop adjusts $\omega$ (and $s_{add}$ in low-power mode) to regulate $W$ using the exact transconductance \eqref{eq:W_closed}.

\begin{figure}[t]
%\input{Freq_Ctrl_parallel.TpX}  % TODO: adapt figure from original paper
% no generating script found in sim; static/imported artwork
\includegraphics[width=\columnwidth]{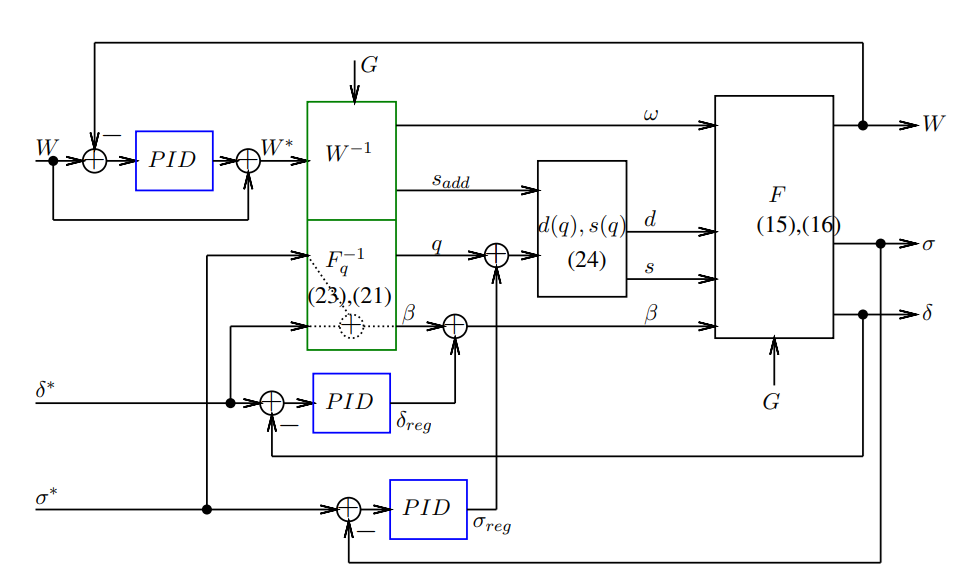}
\centering
\caption{Controller architecture with commutation parameters control $\sigma$ and $\delta$, and output power control $W$.}
\label{fig:Freq_Ctrl_parallel}
\end{figure}

The key difference from the FHA-based architecture: the inversion maps now operate on \textit{resonant-time quantities} ($\hat d$, $\hat s$, $\hat\beta$) which are physical time intervals (in units of $1/\omega_0$), not switching-frequency angles. These quantities depend on $F_n$, coupling the inner commutation loop to the outer frequency loop. In practice, the feedforward look-up table is indexed by $(G, F_n)$ rather than $G$ alone, reflecting the improved physical modeling provided by the state-plane formalism.

% ==============================================================
\section{Numerical Results}
% ==============================================================

The model is evaluated for a converter with $L = 31\,\mu$H, $C = 8.2\,$nF, $n = 2.2$ ($f_0 = 315.7\,$kHz, $Z_0 = 61.5\,\Omega$).

\subsection{Transconductance Comparison}

Fig.~\ref{fig:W_105} and Fig.~\ref{fig:W_20} compare $W_{\text{exact}}$ from \eqref{eq:W_closed} with $W_{FHA}$ from \eqref{eq:FHA_model} at $F_n = 1.05$ (near resonance) and $F_n = 2.0$ (far from resonance).

\begin{figure}[t]
\centering
% this figure is generated by sim/state_plane_paper_plots_v2.py script
\includegraphics[width=\columnwidth]{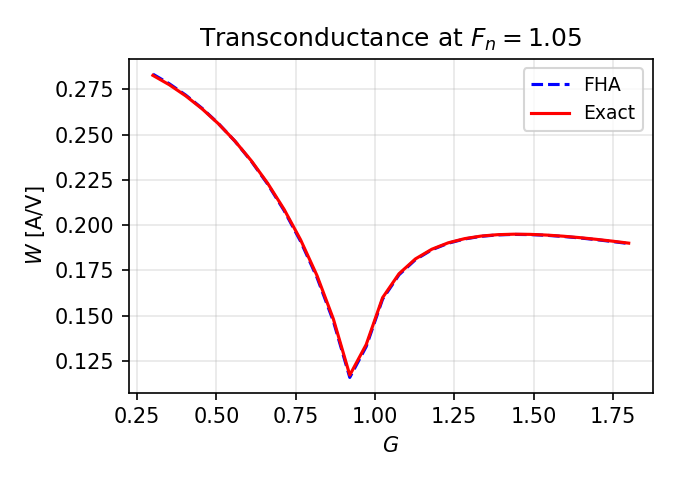}
\caption{Transconductance comparison at $F_n = 1.05$: exact and FHA nearly coincide.}
\label{fig:W_105}
\end{figure}

\begin{figure}[t]
\centering
% this figure is generated by sim/state_plane_paper_plots_v2.py script
\includegraphics[width=\columnwidth]{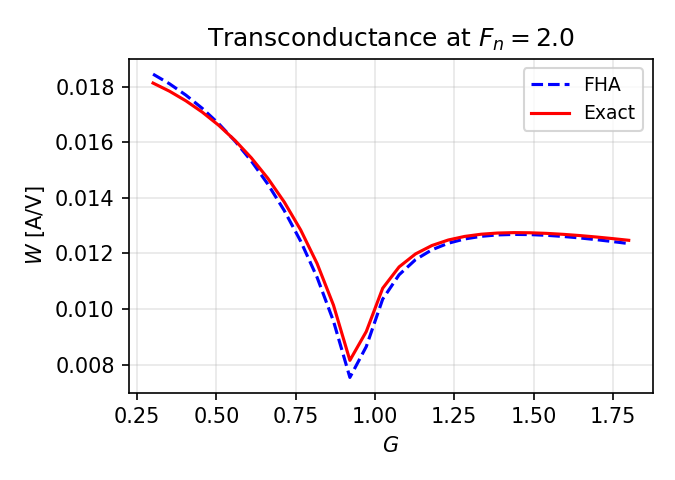}
\caption{Transconductance comparison at $F_n = 2.0$: small but visible deviation appears.}
\label{fig:W_20}
\end{figure}

\subsection{Inversion Accuracy}

Table~\ref{tab:inversion} compares the exact inversion formulas with FHA. The exact formulas achieve the target $\sigma^*$ precisely, while FHA produces errors of 3--72\% depending on operating point.

\begin{table}[t]
\centering
\caption{Inversion comparison: exact formula vs.\ FHA ($\sigma$ achieved)}
\label{tab:inversion}
\small
\begin{tabular}{cccc|cc}
\toprule
$\sigma^*$ & $\delta^*$ & $G$ & $F_n$ & $\sigma$ at $d_{\text{exact}}$ & $\sigma$ at $d_{FHA}$ \\
\midrule
0.1 & 0.0 & 0.7 & 1.50 & 0.100 & 0.028 \\
0.2 & 0.0 & 0.7 & 1.50 & 0.200 & 0.110 \\
0.1 & 0.1 & 0.7 & 1.50 & 0.100 & 0.059 \\
0.3 & 0.0 & 0.5 & 1.30 & 0.300 & 0.251 \\
0.1 & 0.0 & 0.7 & 1.05 & 0.100 & 0.079 \\
\bottomrule
\end{tabular}
\end{table}

\subsection{Commutation Angle Prediction}

Fig.~\ref{fig:sigma_error} shows the relative $\sigma$ prediction error. The FHA consistently overestimates $\sigma$ by 5--20\%, meaning the actual ZVS margin is lower than predicted.

\begin{figure}[t]
\centering
% this figure is generated by sim/state_plane_paper_plots_v2.py script
\includegraphics[width=\columnwidth]{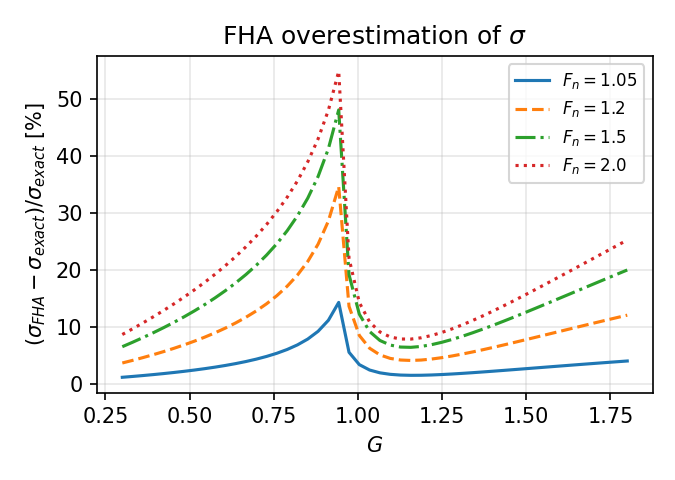}
\caption{FHA overestimation of $\sigma$ relative to exact state-plane value.}
\label{fig:sigma_error}
\end{figure}

\subsection{Partial Rectification Mode}

Fig.~\ref{fig:partial_rect} shows FHA error growth with $s_{add}$, reaching 15\% for $s_{add} = 2.5\,$rad.

\begin{figure}[t]
\centering
% this figure is generated by sim/state_plane_paper_plots_v2.py script
\includegraphics[width=\columnwidth]{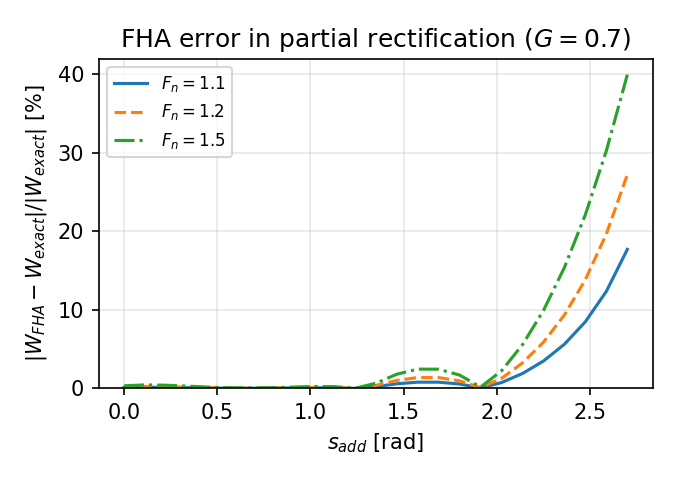}
\caption{FHA transconductance error in partial rectification mode ($G=0.7$, $\sigma^*=0.1$).}
\label{fig:partial_rect}
\end{figure}

% ==============================================================
\section{Conclusion}
% ==============================================================

Exact closed-form feedforward inversion maps have been derived for the dual-bridge series resonant converter using state-plane trajectory analysis. The main contributions are:

\begin{enumerate}
\item Exact inversion formulas \eqref{eq:inv_buck}--\eqref{eq:inv_boost} for computing switching parameters $(d, s, \beta)$ from desired commutation angles $(\sigma^*, \delta^*)$. These have a structure analogous to the FHA trigonometric equations from \cite{Borisevich2019} but incorporate the normalized frequency through $\Theta = \pi/F_n$ and commutation angles normalization.

\item A rigorous proof that the FHA inversion is the exact state-plane inversion evaluated at resonance ($F_n = 1$). This establishes the FHA as a special case rather than an independent approximation.

\item For above-resonance operation ($F_n > 1$), the exact inversions eliminate the 5--72\% commutation angle errors of FHA-based feedforward, enabling precise ZVS margin control.

\item The generalization to partial rectification mode \eqref{eq:inv_buck_sadd} retains the closed-form structure, providing exact low-power control maps.

\item The controller architecture mirrors the parallel nonlinear compensation from \cite{Borisevich2019}, with feedforward maps now operating on resonant-time quantities ($\hat d$, $\hat s$, $\hat\beta$ in units of $1/\omega_0$) that couple the commutation and frequency loops.
\end{enumerate}

The state-plane framework (complex state variable, arc composition recurrence, half-wave symmetry) follows the established approach of \cite{Oruganti1985, Oruganti1988}, extended here to the DB SRC topology with full $(d, s, \beta, \omega)$ modulation. The novel contribution is the derivation of exact invertible feedforward maps from this framework and the explicit connection to the FHA results.

% ==============================================================
% REFERENCES
% ==============================================================

% ==============================================================
% APPENDIX: Full Mathematical Derivation (single-column for readability)
% ==============================================================
\onecolumn
\appendices
\section{Exact State-Plane Analysis of Dual-Bridge SRC}

\subsection{Normalized state-plane formulation}

Consider the series LC resonant tank driven by the net voltage $v(t) = v_{in}(t) - n \cdot v_{out}(t)$. The state variables are the inductor current $i_L(t)$ and the capacitor voltage $v_C(t)$:

\begin{equation}\label{eq:sp_dynamics}
L \frac{d i_L}{dt} = v(t) - v_C(t), \quad C \frac{d v_C}{dt} = i_L(t)
\end{equation}

With the resonant frequency $\omega_0 = 1/\sqrt{LC}$, characteristic impedance $Z_0 = \sqrt{L/C}$, and normalized coordinates:

\begin{equation}\label{eq:sp_normalization}
x = \frac{v_C}{V_{in}}, \quad y = \frac{Z_0 \, i_L}{V_{in}}, \quad \theta = \omega_0 t
\end{equation}

the dynamics become:

\begin{equation}\label{eq:sp_norm_dynamics}
\frac{dx}{d\theta} = y, \quad \frac{dy}{d\theta} = c - x
\end{equation}

where $c = V_n/V_{in}$ is the normalized applied voltage (constant within each switching interval).

\subsection{Complex state variable and rotation}

Introduce the complex state variable:

\begin{equation}\label{eq:sp_complex_state}
z = x + jy
\end{equation}

The solution of \eqref{eq:sp_norm_dynamics} for constant applied voltage $c$ with initial condition $z_0$ is:

\begin{equation}\label{eq:sp_complex_solution}
z(\theta) = c + (z_0 - c)\,e^{-j\theta}
\end{equation}

This represents a clockwise circular arc in the complex plane, centered at $c$ (on the real axis), with radius $|z_0 - c|$. The physical interpretation is: the normalized capacitor voltage oscillates around the applied voltage level $c$, with the normalized current as the imaginary part.

\subsection{Switching pattern and interval decomposition}

The positive half-cycle of the converter (one half switching period, $t' \in [0,\pi]$ in switching-frequency coordinates) is decomposed into $N$ intervals with constant applied voltages $c_1, c_2, \ldots, c_N$ and angular durations $\alpha_1, \alpha_2, \ldots, \alpha_N$ (in switching-frequency radians, satisfying $\sum_k \alpha_k = \pi$).

Converting to resonant-time coordinates, the durations become $\Delta\theta_k = \alpha_k / F_n$ where $F_n = \omega/\omega_0$ is the normalized switching frequency. The half-period in resonant-time is $\Theta = \pi / F_n$.

For the DB SRC with parameters $(d, s, \beta, G)$, two principal orderings arise depending on the relative position of the primary turn-off edge $d$ and the secondary active-to-short transition at $s + \beta$. In both cases $N = 4$ intervals span the positive half-cycle:

\textbf{Case 1:} $d \ge s + \beta$ (primary pulse encompasses secondary shorting interval):
\begin{equation}\label{eq:sp_case1}
\begin{array}{c|c|c}
k & c_k / V_{in} & \alpha_k \\
\hline
1 & 1 + G & \beta \\
2 & 1 & s \\
3 & 1 - G & d - s - \beta \\
4 & -G & \pi - d
\end{array}
\end{equation}

\textbf{Case 2:} $d < s + \beta$ (primary pulse ends before secondary becomes active):
\begin{equation}\label{eq:sp_case2}
\begin{array}{c|c|c}
k & c_k / V_{in} & \alpha_k \\
\hline
1 & 1 + G & \beta \\
2 & 1 & d - \beta \\
3 & 0 & s - d + \beta \\
4 & -G & \pi - s - \beta
\end{array}
\end{equation}

In both cases, the first interval has the secondary in negative-active state (carry-over from previous half-cycle), and the last interval has the primary off with secondary in positive-active state. Intervals with $\alpha_k = 0$ (e.g., when $d = \pi$ or $s = 0$) are simply omitted, reducing $N$ accordingly.

All subsequent derivations use the values of $c_k$ and $\alpha_k$ (or equivalently $\Delta\theta_k = \alpha_k/F_n$) taken from the applicable table \eqref{eq:sp_case1} or \eqref{eq:sp_case2}, with $N$ being the number of non-zero-duration intervals.

\subsection{Composition of rotations}

Starting from initial state $z_0$, the state after interval $k$ is:

\begin{equation}\label{eq:sp_one_rotation}
z \mapsto c_k + (z - c_k)\,e^{-j\Delta\theta_k}
\end{equation}

Composing all $N$ intervals sequentially:

\begin{equation}\label{eq:sp_composition}
z_N = \mathcal{F}(z_0) = e^{-j\Theta} z_0 + \mathcal{C}
\end{equation}

where $\Theta = \sum_{k=1}^N \Delta\theta_k = \pi/F_n$ is the half-period, and the complex offset $\mathcal{C}$ is computed by the recurrence:

\begin{equation}\label{eq:sp_recurrence}
\mathcal{C}_0 = 0, \quad \mathcal{C}_k = (\mathcal{C}_{k-1} + c_k)\,e^{-j\Delta\theta_k} - c_k, \quad \mathcal{C} = \mathcal{C}_N
\end{equation}

Figure~\ref{fig:sp_composition} illustrates the composition of rotations for a representative operating point. Each colored arc corresponds to one switching interval, centered at $c_k$ on the real axis. The state advances clockwise along each arc for an angular extent of $\Delta\theta_k$ radians, then transitions to the next arc centered at $c_{k+1}$.

\begin{figure}[h]
\centering
% this figure is generated by sim/state_plane_illustrations.py script
\includegraphics[width=0.85\textwidth]{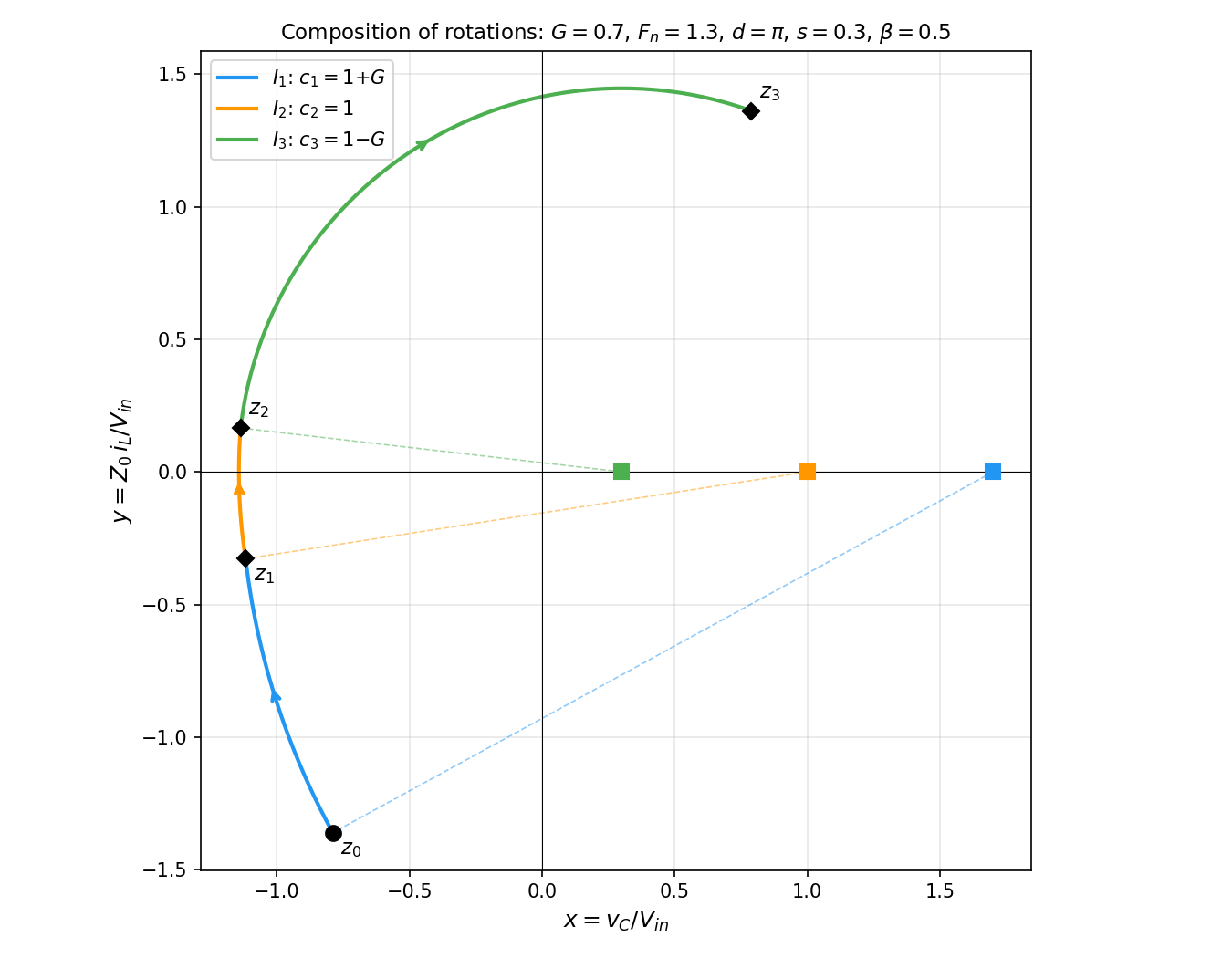}
\caption{Composition of circular arcs in the state plane. Each interval $I_k$ traces a clockwise arc about center $c_k$ (square markers). The switching instants $z_0, z_1, \ldots, z_N$ are marked with circles/diamonds.}
\label{fig:sp_composition}
\end{figure}

\subsection{Half-wave symmetry and steady-state solution}

Under half-wave symmetric modulation, the negative half-cycle is the exact negation of the positive half-cycle. The periodicity condition over one full cycle therefore reduces to:

\begin{equation}\label{eq:sp_half_wave_symmetry}
z(\Theta) = -z(0)
\end{equation}

This condition is illustrated in Figure~\ref{fig:sp_halfwave}: the trajectory in the negative half-cycle is the point-symmetric image (through the origin) of the positive half-cycle trajectory.

\begin{figure}[h]
\centering
% this figure is generated by sim/state_plane_illustrations.py script
\includegraphics[width=0.85\textwidth]{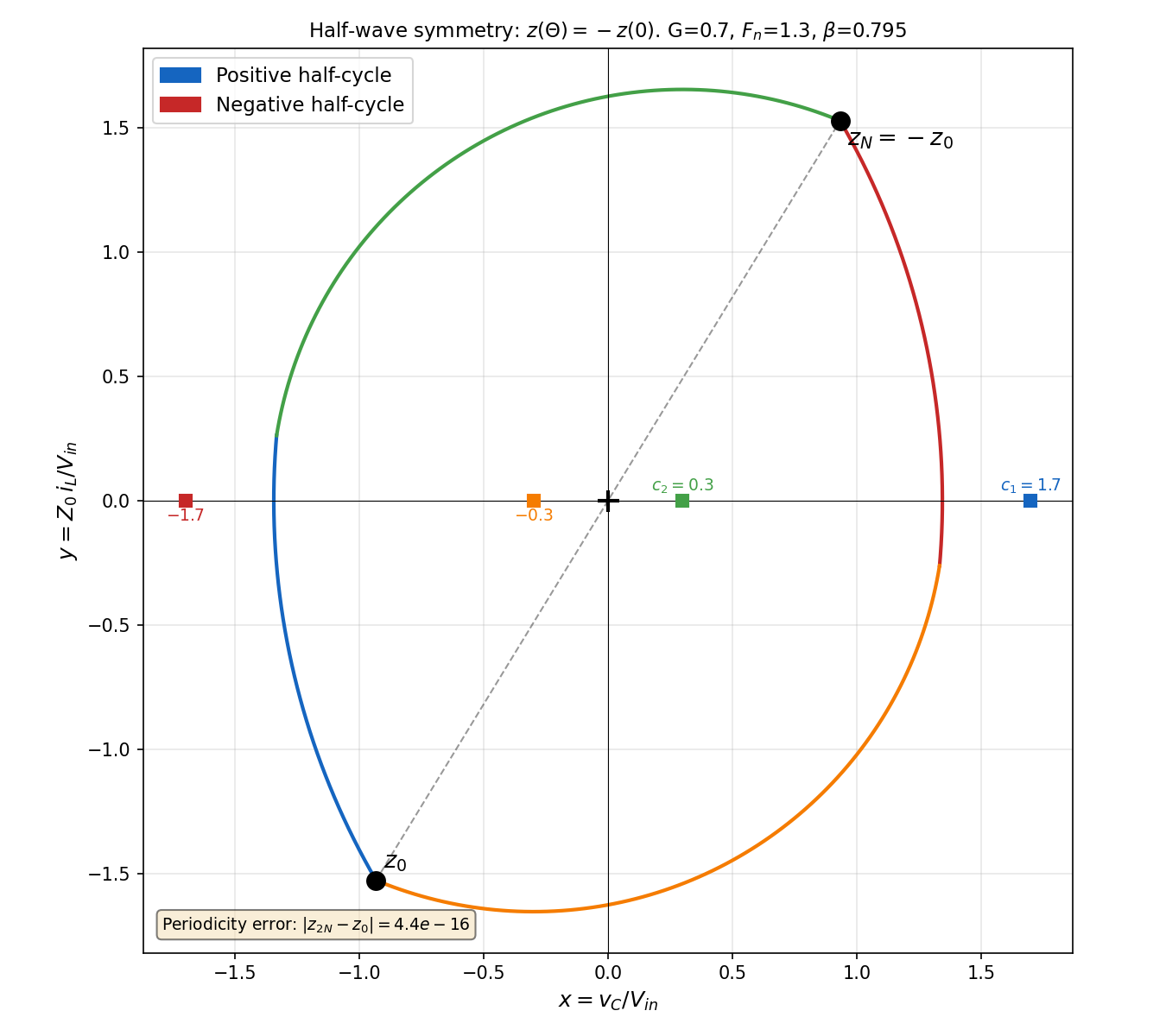}
\caption{Half-wave symmetry in the state plane. The positive half-cycle (blue/green arcs) starts at $z_0$ and ends at $z_N = -z_0$. The negative half-cycle (red/orange arcs) traverses the point-symmetric trajectory, returning to $z_0$.}
\label{fig:sp_halfwave}
\end{figure}

Substituting \eqref{eq:sp_composition}:

\begin{equation}
e^{-j\Theta}\,z_0 + \mathcal{C} = -z_0
\end{equation}

Note: the recurrence \eqref{eq:sp_recurrence} computes $\mathcal{C} = -\mathbf{b}$, where $\mathbf{b}$ is the true affine offset of the composed map. This can be verified by observing that propagating $z = 0$ through the physical arc maps yields $\mathbf{b}$, while the recurrence starting from 0 yields $-\mathbf{b}$. The composed map is therefore $z_N = e^{-j\Theta}z_0 - \mathcal{C}$, and the half-wave condition $z_N = -z_0$ gives:

\begin{equation}
-z_0 = e^{-j\Theta}\,z_0 - \mathcal{C} \quad \Longrightarrow \quad z_0(1 + e^{-j\Theta}) = \mathcal{C}
\end{equation}

Solving for the initial state:

\begin{equation}\label{eq:sp_z0_solution}
\boxed{z_0 = \frac{\mathcal{C}}{1 + e^{-j\Theta}}}
\end{equation}

Using the identity $1 + e^{-j\Theta} = 2\cos(\Theta/2)\,e^{-j\Theta/2}$:

\begin{equation}\label{eq:sp_z0_polar}
z_0 = \frac{\mathcal{C}}{2\cos(\Theta/2)}\,e^{j\Theta/2}
\end{equation}

This is the complete closed-form solution for the steady-state initial conditions. The real and imaginary parts give:

\begin{equation}\label{eq:sp_x0_y0}
x_0 = \mathrm{Re}(z_0), \quad y_0 = \mathrm{Im}(z_0)
\end{equation}

\subsection{Explicit expansion of $\mathcal{C}$}

Expanding the recurrence \eqref{eq:sp_recurrence}:

\begin{equation}\label{eq:sp_C_expanded}
\mathcal{C} = \sum_{k=1}^N c_k \left(e^{-j\Delta\theta_k} - 1\right) \prod_{m=k+1}^N e^{-j\Delta\theta_m}
\end{equation}

Using $e^{-j\alpha} - 1 = -2j\sin(\alpha/2)\,e^{-j\alpha/2}$ and defining $\Phi_k = \sum_{m=k+1}^N \Delta\theta_m$ (remaining angle after interval $k$):

\begin{equation}\label{eq:sp_C_trig}
\mathcal{C} = -2j \sum_{k=1}^N c_k \sin\frac{\Delta\theta_k}{2}\, e^{-j(\Delta\theta_k/2 + \Phi_k)}
\end{equation}

For the special case of equal-angle normalization ($\Delta\theta_k = \alpha_k/F_n$):

\begin{equation}\label{eq:sp_C_explicit}
\mathcal{C} = -\frac{2j}{F_n} \sum_{k=1}^N c_k \sin\frac{\alpha_k}{2F_n}\, e^{-j(\alpha_k/2 + \sum_{m=k+1}^N \alpha_m)/F_n}
\end{equation}

\subsection{State at intermediate switching instants}

The state at the end of interval $k$ (needed for output current and zero-crossing calculations) is obtained by partial composition. Define:

\begin{equation}\label{eq:sp_partial_state}
z_k = e^{-j\Theta_k}\,z_0 - \mathcal{C}_k
\end{equation}

where $\Theta_k = \sum_{m=1}^k \Delta\theta_m$ and $\mathcal{C}_k$ is the partial offset from the recurrence \eqref{eq:sp_recurrence}. The minus sign reflects the fact that $\mathcal{C}_k$ is the negated affine offset of the composed map through the first $k$ intervals.

\subsection{Exact output transconductance}\label{sec:sp_W_exact}

The output current is determined by the power delivered through the secondary bridge:

\begin{equation}
I_{out} = \frac{\langle v_{out}(t) \cdot i_L(t) \rangle}{n V_{out}}
\end{equation}

Over the positive half-cycle, the integral of current weighted by the rectification function is:

\begin{equation}
\int_0^\Theta r(\theta)\,y(\theta)\,d\theta = \sum_{k=1}^N r_k \int_{\Theta_{k-1}}^{\Theta_k} y(\theta)\,d\theta = \sum_{k=1}^N r_k (x_k - x_{k-1})
\end{equation}

where $r_k$ is the rectification sign ($+1$ for secondary active positive, $-1$ for secondary active negative, $0$ for shorted).

By half-wave symmetry, the negative half-cycle contribution equals the positive half-cycle contribution (both $v_{out}$ and $i_L$ negate, so their product is unchanged). Therefore:

\begin{equation}\label{eq:sp_W_half_cycle}
W = \frac{n}{Z_0 \cdot \Theta} \sum_{k=1}^N r_k\,\mathrm{Re}(z_k - z_{k-1})
\end{equation}

where $\mathrm{Re}(z_k - z_{k-1}) = x_k - x_{k-1}$ is the capacitor voltage change in interval $k$.

Now, using $z_k - z_{k-1} = (z_{k-1} - c_k)(e^{-j\Delta\theta_k} - 1)$ from \eqref{eq:sp_one_rotation}:

\begin{equation}\label{eq:sp_delta_z}
z_k - z_{k-1} = (z_{k-1} - c_k)(e^{-j\Delta\theta_k} - 1)
\end{equation}

Taking the real part:

\begin{equation}\label{eq:sp_delta_x}
x_k - x_{k-1} = (x_{k-1} - c_k)(\cos\Delta\theta_k - 1) + y_{k-1}\sin\Delta\theta_k
\end{equation}

Substituting into \eqref{eq:sp_W_half_cycle}:

\begin{equation}\label{eq:sp_W_expanded}
W = \frac{n}{Z_0 \Theta} \sum_{k=1}^N r_k \left[ (x_{k-1} - c_k)(\cos\Delta\theta_k - 1) + y_{k-1}\sin\Delta\theta_k \right]
\end{equation}

\subsubsection{Compact complex form of $W$}

Using the complex delta notation $\Delta z_k = z_k - z_{k-1} = (z_{k-1} - c_k)(e^{-j\Delta\theta_k} - 1)$, the transconductance can be written as:

\begin{equation}\label{eq:sp_W_compact}
W = \frac{n}{Z_0\Theta}\,\mathrm{Re}\!\left(\sum_{k=1}^N r_k\,\Delta z_k\right) = \frac{n}{Z_0\Theta}\,\mathrm{Re}\!\left(\sum_{k=1}^N r_k (z_{k-1} - c_k)(e^{-j\Delta\theta_k} - 1)\right)
\end{equation}

Since $z_{k-1} = e^{-j\Theta_{k-1}} z_0 - \mathcal{C}_{k-1}$ (the intermediate state, with $\mathcal{C}_{k-1}$ from the recurrence), this is linear in $z_0$:

\begin{equation}\label{eq:sp_W_linear_z0}
W = \frac{n}{Z_0\Theta}\,\mathrm{Re}\!\left( z_0 \cdot \mathcal{A} + \mathcal{B} \right)
\end{equation}

where:
\begin{equation}\label{eq:sp_AB_coeff}
\begin{split}
\mathcal{A} &= \sum_{k=1}^N r_k\, e^{-j\Theta_{k-1}} (e^{-j\Delta\theta_k} - 1) \\
\mathcal{B} &= \sum_{k=1}^N r_k\, (-\mathcal{C}_{k-1} - c_k)(e^{-j\Delta\theta_k} - 1)
\end{split}
\end{equation}

Substituting $z_0 = \mathcal{C}/(1 + e^{-j\Theta})$ from \eqref{eq:sp_z0_solution}:

\begin{equation}\label{eq:sp_W_final_closed}
\boxed{W = \frac{n}{Z_0\Theta}\,\mathrm{Re}\!\left( \frac{\mathcal{C} \cdot \mathcal{A}}{1 + e^{-j\Theta}} + \mathcal{B} \right)}
\end{equation}

This is the \textbf{exact closed-form transconductance}, expressed entirely through the switching parameters via $\mathcal{C}$, $\mathcal{A}$, $\mathcal{B}$.

\subsection{Fully-driven converter analysis}

For the fully-driven case ($d = \pi$) with the interval structure from Case~1 (with $d \ge s + \beta$, which gives $d - s - \beta = \pi - s - \beta$ and last interval duration $\pi - d = 0$, so we reduce to 3 intervals):

\begin{equation}\label{eq:sp_3int_full_drive}
\begin{array}{c|c|c|c}
k & c_k/V_{in} & \alpha_k & r_k \\
\hline
1 & 1 + G & \beta & -1 \\
2 & 1 & s & 0 \\
3 & 1 - G & \pi - s - \beta & +1
\end{array}
\end{equation}

The rectification signs are: $r_1 = -1$ (secondary active negative, carry-over from previous half-cycle), $r_2 = 0$ (shorted), $r_3 = +1$ (secondary active positive). The transconductance becomes:

\begin{equation}\label{eq:sp_W_3int}
W = \frac{n}{Z_0 \Theta}\left[ -(x_1 - x_0) + (x_3 - x_2) \right]
\end{equation}

Using $x_3 = -x_0$ (half-wave symmetry) and noting that $x_1 - x_0 + x_2 - x_1 + x_3 - x_2 = -2x_0$:

\begin{equation}
(x_3 - x_2) = -2x_0 - (x_1 - x_0) - (x_2 - x_1)
\end{equation}

Therefore:

\begin{equation}\label{eq:sp_W_3int_v2}
W = \frac{n}{Z_0 \Theta}\left[ -2(x_1 - x_0) - (x_2 - x_1) - 2x_0 \right]
\end{equation}

Or equivalently, combining with the half-wave symmetry $x_3 = -x_0$:

\begin{equation}\label{eq:sp_W_3int_v3}
W = \frac{n}{Z_0 \Theta}\left[-2x_0 - 2(x_1 - x_0) - (x_2 - x_1)\right] = \frac{n}{Z_0\Theta}\left[-2x_1 - (x_2 - x_1)\right]
\end{equation}

Substituting the voltage changes using \eqref{eq:sp_delta_x}:

\begin{equation}\label{eq:sp_W_full_analytic}
\begin{split}
W = \frac{n}{Z_0\Theta} \bigg[ &-2\,\mathrm{Re}\!\left((z_0 - c_1)(e^{-j\Delta\theta_1} - 1)\right) \\
&\quad -\,\mathrm{Re}\!\left((z_1 - c_2)(e^{-j\Delta\theta_2} - 1)\right) \bigg]
\end{split}
\end{equation}

where $z_0$ is given by \eqref{eq:sp_z0_solution}, $c_1 = 1+G$, $c_2 = 1$, $\Delta\theta_1 = \beta/F_n$, $\Delta\theta_2 = s/F_n$, and $z_1 = c_1 + (z_0 - c_1)e^{-j\Delta\theta_1}$.

\subsubsection{Exact $W$ for the fully-driven converter ($d = \pi$, $s \ge 0$)}

For $d = \pi$, the interval structure \eqref{eq:sp_3int_full_drive} gives 3 intervals with $\Delta\theta_1 = \beta/F_n$, $\Delta\theta_2 = s/F_n$, $\Delta\theta_3 = (\pi-s-\beta)/F_n$, and rectification signs $r_1 = -1$, $r_2 = 0$, $r_3 = +1$.

Define shorthand angles $\hat\beta = \beta/F_n$, $\hat s = s/F_n$, $\Theta = \pi/F_n$. The offset $\mathcal{C}$ from the recurrence \eqref{eq:sp_recurrence} with normalized centers $c_1 = 1+G$, $c_2 = 1$, $c_3 = 1-G$:

\begin{equation}\label{eq:sp_C_step1}
\mathcal{C}_1 = (0 + c_1)e^{-j\hat\beta} - c_1 = c_1(e^{-j\hat\beta} - 1)
\end{equation}
\begin{equation}\label{eq:sp_C_step2}
\mathcal{C}_2 = (\mathcal{C}_1 + c_2)e^{-j\hat s} - c_2 = \mathcal{C}_1 e^{-j\hat s} + c_2(e^{-j\hat s} - 1)
\end{equation}
\begin{equation}\label{eq:sp_C_step3}
\mathcal{C} = \mathcal{C}_3 = (\mathcal{C}_2 + c_3)e^{-j(\Theta-\hat\beta-\hat s)} - c_3 = \mathcal{C}_2\, e^{-j(\Theta-\hat\beta-\hat s)} + c_3(e^{-j(\Theta-\hat\beta-\hat s)} - 1)
\end{equation}

Expanding fully:
\begin{equation}\label{eq:sp_C_full}
\mathcal{C} = c_1(e^{-j\hat\beta}-1)e^{-j(\Theta-\hat\beta)} + c_2(e^{-j\hat s}-1)e^{-j(\Theta-\hat\beta-\hat s)} + c_3(e^{-j(\Theta-\hat\beta-\hat s)}-1)
\end{equation}

Substituting $c_1 = 1+G$, $c_2 = 1$, $c_3 = 1-G$ and using $e^{-j\alpha} - 1 = -2j\sin(\alpha/2)e^{-j\alpha/2}$:

\begin{equation}\label{eq:sp_C_trig_full}
\begin{split}
\mathcal{C} = -2j\bigg[ &(1+G)\sin\frac{\hat\beta}{2}\,e^{-j(\Theta - \hat\beta/2)} \\
&+ \sin\frac{\hat s}{2}\,e^{-j(\Theta - \hat\beta - \hat s/2)} \\
&+ (1-G)\sin\frac{\Theta-\hat\beta-\hat s}{2}\,e^{-j(\Theta-\hat\beta-\hat s)/2} \bigg]
\end{split}
\end{equation}

The coefficients $\mathcal{A}$ and $\mathcal{B}$ from \eqref{eq:sp_AB_coeff} with $r_1=-1$, $r_2=0$, $r_3=+1$:

\begin{equation}\label{eq:sp_A_full}
\mathcal{A} = -(e^{-j\hat\beta}-1) + e^{-j(\hat\beta+\hat s)}(e^{-j(\Theta-\hat\beta-\hat s)}-1)
\end{equation}

Simplifying:
\begin{equation}
\mathcal{A} = 1 - e^{-j\hat\beta} + e^{-j\Theta} - e^{-j(\hat\beta+\hat s)} = (1-e^{-j\hat\beta}) - e^{-j(\hat\beta+\hat s)}(1 - e^{-j(\Theta-\hat\beta-\hat s)})
\end{equation}

And:
\begin{equation}\label{eq:sp_B_full}
\mathcal{B} = -(\mathcal{C}_0 - c_1)(e^{-j\hat\beta}-1) + (\mathcal{C}_2 - c_3)(e^{-j(\Theta-\hat\beta-\hat s)}-1)
\end{equation}

with $\mathcal{C}_0 = 0$.

\subsubsection{Special case $s = 0$, $d = \pi$ (pure synchronous rectification)}

With $s = 0$, only 2 distinct intervals remain in the positive half-cycle:
\begin{equation}
\begin{array}{c|c|c|c}
k & c_k/V_{in} & \alpha_k & r_k \\
\hline
1 & 1 + G & \beta & -1 \\
2 & 1 - G & \pi - \beta & +1
\end{array}
\end{equation}

The offset $\mathcal{C}$ simplifies to:
\begin{equation}\label{eq:sp_C_s0}
\mathcal{C} = (1+G)(e^{-j\hat\beta}-1)e^{-j(\Theta-\hat\beta)} + (1-G)(e^{-j(\Theta-\hat\beta)}-1)
\end{equation}

Expanding and collecting:
\begin{equation}\label{eq:sp_C_s0_expanded}
\mathcal{C} = (1+G)e^{-j\Theta} - (1+G)e^{-j(\Theta-\hat\beta)} + (1-G)e^{-j(\Theta-\hat\beta)} - (1-G)
\end{equation}
\begin{equation}
= (1+G)e^{-j\Theta} - 2Ge^{-j(\Theta-\hat\beta)} - (1-G)
\end{equation}

The initial state from \eqref{eq:sp_z0_solution}:
\begin{equation}\label{eq:sp_z0_s0}
z_0 = \frac{\mathcal{C}}{1 + e^{-j\Theta}} = \frac{(1+G)e^{-j\Theta} - 2Ge^{-j(\Theta-\hat\beta)} - (1-G)}{1 + e^{-j\Theta}}
\end{equation}

For the transconductance with $r_1 = -1$, $r_2 = +1$ (using \eqref{eq:sp_W_half_cycle}):
\begin{equation}\label{eq:sp_W_s0}
W = \frac{n}{Z_0\Theta}\left[-(x_1-x_0) + (x_2-x_1)\right] = \frac{n}{Z_0\Theta}\left[-2(x_1-x_0) + (x_2-x_0)\right]
\end{equation}

Using $x_2 - x_0 = \mathrm{Re}(z_2 - z_0) = \mathrm{Re}(-2z_0) = -2x_0$ (half-wave symmetry):
\begin{equation}\label{eq:sp_W_s0_v2}
W = \frac{n}{Z_0\Theta}\left[-2\,\mathrm{Re}(\Delta z_1) - 2x_0\right]
\end{equation}

where $\Delta z_1 = (z_0 - c_1)(e^{-j\hat\beta}-1)$ with $c_1 = 1+G$.

Substituting $z_0$ from \eqref{eq:sp_z0_s0}:
\begin{equation}\label{eq:sp_W_s0_final}
W = \frac{n}{Z_0\Theta}\,\mathrm{Re}\!\left[\frac{-2\mathcal{C}}{1+e^{-j\Theta}} + 2\left(\frac{\mathcal{C}}{1+e^{-j\Theta}} - (1+G)\right)(e^{-j\hat\beta}-1) \right]
\end{equation}

which after simplification gives:
\begin{equation}\label{eq:sp_W_s0_closed}
\boxed{W\big|_{s=0} = \frac{2n}{Z_0\Theta}\,\mathrm{Re}\!\left[\frac{\mathcal{C}}{1+e^{-j\Theta}}\left(-1 + (e^{-j\hat\beta}-1)\right) - (1+G)(e^{-j\hat\beta}-1) \right]}
\end{equation}

with $\mathcal{C}$ from \eqref{eq:sp_C_s0_expanded}.

\subsection{Explicit expansion of all coefficients}

In this section the general formulas are expanded explicitly for both switching pattern cases. Define shorthand notation for normalized angles: $\hat\alpha = \alpha/F_n$ for any switching angle $\alpha$, so that $\hat\beta = \beta/F_n$, $\hat s = s/F_n$, $\hat d = d/F_n$, and $\Theta = \pi/F_n$.

\subsubsection{Case 1: $d \ge s + \beta$}

The intervals are $(c_k, \alpha_k) = \{(1{+}G,\,\beta),\; (1,\,s),\; (1{-}G,\,d{-}s{-}\beta),\; (-G,\,\pi{-}d)\}$ with rectification signs $r_k = \{-1,\, 0,\, +1,\, +1\}$.

\paragraph{Offset $\mathcal{C}$.} From the expanded form \eqref{eq:sp_C_expanded}:
\begin{equation}\label{eq:sp_C_case1}
\begin{split}
\mathcal{C} &= (1{+}G)(e^{-j\hat\beta}-1)\,e^{-j(\Theta-\hat\beta)} + (e^{-j\hat s}-1)\,e^{-j(\Theta-\hat\beta-\hat s)} \\
&\quad + (1{-}G)(e^{-j(\hat d-\hat s-\hat\beta)}-1)\,e^{-j(\Theta-\hat d)} + (-G)(e^{-j(\Theta-\hat d)}-1)
\end{split}
\end{equation}

Expanding each product $c_k(e^{-j\hat\alpha_k}-1)e^{-j\Phi_k} = c_k\,e^{-j(\hat\alpha_k+\Phi_k)} - c_k\,e^{-j\Phi_k} = c_k\,e^{-j(\Theta-\theta_{k-1})} - c_k\,e^{-j(\Theta-\theta_k)}$... 

More usefully, collecting all exponential terms:
\begin{equation}\label{eq:sp_C_case1_collected}
\begin{split}
\mathcal{C} &= (1{+}G)\,e^{-j\Theta} - (1{+}G)\,e^{-j(\Theta-\hat\beta)} + e^{-j(\Theta-\hat\beta)} - e^{-j(\Theta-\hat\beta-\hat s)} \\
&\quad + (1{-}G)\,e^{-j(\Theta-\hat\beta-\hat s)} - (1{-}G)\,e^{-j(\Theta-\hat d)} - G\,e^{-j(\Theta-\hat d)} + G
\end{split}
\end{equation}

Simplifying by grouping terms at common exponential arguments:
\begin{equation}\label{eq:sp_C_case1_simple}
\boxed{\mathcal{C}\big|_{\text{Case 1}} = (1{+}G)\,e^{-j\Theta} - G\,e^{-j(\Theta-\hat\beta)} + G\,e^{-j(\Theta-\hat\beta-\hat s)} - e^{-j(\Theta-\hat d)} + G}
\end{equation}

\paragraph{Initial state.}
\begin{equation}\label{eq:sp_z0_case1}
z_0 = \frac{\mathcal{C}}{1 + e^{-j\Theta}} = \frac{(1{+}G)\,e^{-j\Theta} - G\,e^{-j(\Theta-\hat\beta)} + G\,e^{-j(\Theta-\hat\beta-\hat s)} - e^{-j(\Theta-\hat d)} + G}{1 + e^{-j\Theta}}
\end{equation}

\paragraph{Coefficient $\mathcal{A}$.} With $r_1=-1$, $r_2=0$, $r_3=+1$, $r_4=+1$ and $\Theta_0=0$, $\Theta_1=\hat\beta$, $\Theta_2=\hat\beta+\hat s$, $\Theta_3=\hat d$:
\begin{equation}\label{eq:sp_A_case1}
\begin{split}
\mathcal{A} &= -(e^{-j\hat\beta}-1) + e^{-j(\hat\beta+\hat s)}(e^{-j(\hat d-\hat s-\hat\beta)}-1) + e^{-j\hat d}(e^{-j(\Theta-\hat d)}-1) \\
&= 1 - e^{-j\hat\beta} + e^{-j\hat d} - e^{-j(\hat\beta+\hat s)} + e^{-j\Theta} - e^{-j\hat d}
\end{split}
\end{equation}

The $e^{-j\hat d}$ terms cancel:
\begin{equation}\label{eq:sp_A_case1_simple}
\boxed{\mathcal{A}\big|_{\text{Case 1}} = 1 - e^{-j\hat\beta} - e^{-j(\hat\beta+\hat s)} + e^{-j\Theta}}
\end{equation}

\paragraph{Coefficient $\mathcal{B}$.} Using $\mathcal{B} = \sum r_k(-\mathcal{C}_{k-1} - c_k)(e^{-j\Delta\theta_k}-1)$ with $\mathcal{C}_0 = 0$:

The intermediate offsets are:
\begin{equation}
\begin{split}
\mathcal{C}_1 &= (1{+}G)(e^{-j\hat\beta}-1) \\
\mathcal{C}_2 &= (1{+}G)(e^{-j\hat\beta}-1)e^{-j\hat s} + (e^{-j\hat s}-1) \\
\mathcal{C}_3 &= \mathcal{C}_2\,e^{-j(\hat d-\hat s-\hat\beta)} + (1{-}G)(e^{-j(\hat d-\hat s-\hat\beta)}-1)
\end{split}
\end{equation}

The coefficient $\mathcal{B}$ has contributions from $k=1,3,4$ (where $r_k \ne 0$):
\begin{equation}\label{eq:sp_B_case1}
\begin{split}
\mathcal{B}\big|_{\text{Case 1}} &= -(- 0 - (1{+}G))(e^{-j\hat\beta}-1) \\
&\quad + (-\mathcal{C}_2 - (1{-}G))(e^{-j(\hat d-\hat s-\hat\beta)}-1) \\
&\quad + (-\mathcal{C}_3 - (-G))(e^{-j(\Theta-\hat d)}-1)
\end{split}
\end{equation}

Expanding $-\mathcal{C}_2 - (1{-}G) = -(1{+}G)e^{-j(\hat\beta+\hat s)} + Ge^{-j\hat s} + G$ and $-\mathcal{C}_3 + G = -(1{+}G)e^{-j\hat d} + Ge^{-j(\hat d-\hat\beta)} + Ge^{-j(\hat d-\hat s-\hat\beta)} + 1$, the factored form is:

\begin{equation}\label{eq:sp_B_case1_factored}
\begin{split}
\mathcal{B}\big|_{\text{Case 1}} &= (1{+}G)(e^{-j\hat\beta}-1) \\
&+ \left[-(1{+}G)e^{-j(\hat\beta+\hat s)} + Ge^{-j\hat s} + G\right](e^{-j(\hat d-\hat s-\hat\beta)}-1) \\
&+ \left[-(1{+}G)e^{-j\hat d} + Ge^{-j(\hat d-\hat\beta)} + Ge^{-j(\hat d-\hat s-\hat\beta)} + 1\right](e^{-j(\Theta-\hat d)}-1)
\end{split}
\end{equation}

Multiplying out all products and collecting terms at common exponential arguments (noting that intermediate exponentials cancel in pairs), the fully simplified form is:

\begin{equation}\label{eq:sp_B_case1_final}
\boxed{\begin{split}
\mathcal{B}\big|_{\text{Case 1}} &= (1{+}G)e^{-j\hat\beta} + (1{+}G)e^{-j(\hat\beta+\hat s)} - Ge^{-j\hat s} \\
&\quad - (1{+}G)e^{-j\Theta} + Ge^{-j(\Theta-\hat\beta)} + Ge^{-j(\Theta-\hat s-\hat\beta)} + e^{-j(\Theta-\hat d)} - 2(1{+}G)
\end{split}}
\end{equation}

\paragraph{Transconductance.} The exact $W$ for Case~1 is:
\begin{equation}\label{eq:sp_W_case1}
W\big|_{\text{Case 1}} = \frac{n}{Z_0\Theta}\,\mathrm{Re}\!\left(\frac{\mathcal{C}\cdot\mathcal{A}}{1+e^{-j\Theta}} + \mathcal{B}\right)
\end{equation}

with $\mathcal{C}$ from \eqref{eq:sp_C_case1_simple} and $\mathcal{A}$ from \eqref{eq:sp_A_case1_simple}.

\subsubsection{Case 2: $d < s + \beta$}

The intervals are $(c_k, \alpha_k) = \{(1{+}G,\,\beta),\; (1,\,d{-}\beta),\; (0,\,s{-}d{+}\beta),\; (-G,\,\pi{-}s{-}\beta)\}$ with rectification signs $r_k = \{-1,\, 0,\, 0,\, +1\}$.

\paragraph{Offset $\mathcal{C}$.} Since $c_3 = 0$, the third term vanishes:
\begin{equation}\label{eq:sp_C_case2}
\begin{split}
\mathcal{C} &= (1{+}G)(e^{-j\hat\beta}-1)\,e^{-j(\Theta-\hat\beta)} + (e^{-j(\hat d-\hat\beta)}-1)\,e^{-j(\Theta-\hat d)} \\
&\quad + (-G)(e^{-j(\Theta-\hat s-\hat\beta)}-1)
\end{split}
\end{equation}

Expanding and collecting:
\begin{equation}\label{eq:sp_C_case2_simple}
\boxed{\mathcal{C}\big|_{\text{Case 2}} = (1{+}G)\,e^{-j\Theta} - (1{+}G)\,e^{-j(\Theta-\hat\beta)} + e^{-j(\Theta-\hat\beta)} - e^{-j(\Theta-\hat d)} - G\,e^{-j(\Theta-\hat s-\hat\beta)} + G}
\end{equation}

Further simplification:
\begin{equation}
\mathcal{C}\big|_{\text{Case 2}} = (1{+}G)\,e^{-j\Theta} - G\,e^{-j(\Theta-\hat\beta)} - e^{-j(\Theta-\hat d)} - G\,e^{-j(\Theta-\hat s-\hat\beta)} + G
\end{equation}

\paragraph{Coefficient $\mathcal{A}$.} Only $r_1=-1$ and $r_4=+1$ contribute, with $\Theta_0=0$ and $\Theta_3 = \hat\beta + (\hat d-\hat\beta) + (\hat s-\hat d+\hat\beta) = \hat s + \hat\beta$:
\begin{equation}\label{eq:sp_A_case2}
\mathcal{A} = -(e^{-j\hat\beta}-1) + e^{-j(\hat s+\hat\beta)}(e^{-j(\Theta-\hat s-\hat\beta)}-1)
\end{equation}

Expanding:
\begin{equation}\label{eq:sp_A_case2_simple}
\boxed{\mathcal{A}\big|_{\text{Case 2}} = 1 - e^{-j\hat\beta} + e^{-j\Theta} - e^{-j(\hat s+\hat\beta)}}
\end{equation}

\paragraph{Observation.} Comparing \eqref{eq:sp_A_case1_simple} and \eqref{eq:sp_A_case2_simple}: the coefficient $\mathcal{A}$ has the \textit{same form} in both cases:
\begin{equation}\label{eq:sp_A_universal}
\mathcal{A} = 1 - e^{-j\hat\beta} + e^{-j\Theta} - e^{-j(\hat s+\hat\beta)}
\end{equation}

This is a consequence of the fact that $\mathcal{A}$ depends only on which intervals are rectified (and their cumulative start angles), not on the voltage levels $c_k$. The rectified intervals are always the first (negative-active) and the last (positive-active), with cumulative angles $0$ and $\hat s + \hat\beta$ respectively.

Using sum-to-product identities, $\mathcal{A}$ can be written in trigonometric form:
\begin{equation}\label{eq:sp_A_trig}
\mathcal{A} = -2j\sin\frac{\hat\beta}{2}\,e^{-j\hat\beta/2} - 2j\sin\frac{\Theta-\hat s-\hat\beta}{2}\,e^{-j(\hat s+\hat\beta+(\Theta-\hat s-\hat\beta)/2)}
\end{equation}

or more compactly:
\begin{equation}
\mathcal{A} = -2j\left[\sin\frac{\hat\beta}{2}\,e^{-j\hat\beta/2} + \sin\frac{\Theta-\hat s-\hat\beta}{2}\,e^{-j(\Theta+\hat s+\hat\beta)/2}\right]
\end{equation}

\paragraph{Coefficient $\mathcal{B}$.} Only $r_1=-1$ and $r_4=+1$ contribute. The intermediate offsets are:
\begin{equation}
\begin{split}
\mathcal{C}_1 &= (1{+}G)(e^{-j\hat\beta}-1) \\
\mathcal{C}_2 &= \mathcal{C}_1\,e^{-j(\hat d-\hat\beta)} + (e^{-j(\hat d-\hat\beta)}-1) \\
\mathcal{C}_3 &= \mathcal{C}_2\,e^{-j(\hat s+\hat\beta-\hat d)} \quad \text{(since $c_3=0$)}
\end{split}
\end{equation}

Expanding $-\mathcal{C}_3 + G = -(1{+}G)e^{-j(\hat\beta+\hat s)} + Ge^{-j\hat s} + e^{-j(\hat s+\hat\beta-\hat d)} + G$:

\begin{equation}\label{eq:sp_B_case2_factored}
\mathcal{B}\big|_{\text{Case 2}} = (1{+}G)(e^{-j\hat\beta}-1) + \left[-(1{+}G)e^{-j(\hat\beta+\hat s)} + Ge^{-j\hat s} + e^{-j(\hat s+\hat\beta-\hat d)} + G\right](e^{-j(\Theta-\hat s-\hat\beta)}-1)
\end{equation}

Multiplying out and collecting:

\begin{equation}\label{eq:sp_B_case2_final}
\boxed{\begin{split}
\mathcal{B}\big|_{\text{Case 2}} &= (1{+}G)e^{-j\hat\beta} + (1{+}G)e^{-j(\hat\beta+\hat s)} - Ge^{-j\hat s} - e^{-j(\hat s+\hat\beta-\hat d)} \\
&\quad - (1{+}G)e^{-j\Theta} + Ge^{-j(\Theta-\hat\beta)} + Ge^{-j(\Theta-\hat s-\hat\beta)} + e^{-j(\Theta-\hat d)} - (1{+}2G)
\end{split}}
\end{equation}

Note that at the case boundary $d = s + \beta$, the term $e^{-j(\hat s+\hat\beta-\hat d)} = e^0 = 1$, and the constant becomes $-(1+2G+1) = -2(1+G)$, matching Case~1 exactly. This confirms continuity of $\mathcal{B}$ across the switching pattern boundary.

\paragraph{Transconductance.}
\begin{equation}\label{eq:sp_W_case2}
W\big|_{\text{Case 2}} = \frac{n}{Z_0\Theta}\,\mathrm{Re}\!\left(\frac{\mathcal{C}\cdot\mathcal{A}}{1+e^{-j\Theta}} + \mathcal{B}\right)
\end{equation}

with $\mathcal{C}$ from \eqref{eq:sp_C_case2_simple} and $\mathcal{A}$ from \eqref{eq:sp_A_case2_simple}.

\subsubsection{Summary of explicit formulas}

The exact transconductance for arbitrary $(d, s, \beta, G, F_n)$ is given by:
\begin{equation}
W = \frac{n}{Z_0\Theta}\,\mathrm{Re}\!\left(\frac{\mathcal{C}\cdot\mathcal{A}}{1+e^{-j\Theta}} + \mathcal{B}\right)
\end{equation}

where $\mathcal{A}$ is universal \eqref{eq:sp_A_universal}, and $\mathcal{C}$ depends on the case:
\begin{itemize}
\item Case 1 ($d \ge s+\beta$): $\mathcal{C} = (1{+}G)e^{-j\Theta} - Ge^{-j(\Theta-\hat\beta)} + Ge^{-j(\Theta-\hat\beta-\hat s)} - e^{-j(\Theta-\hat d)} + G$
\item Case 2 ($d < s+\beta$): $\mathcal{C} = (1{+}G)e^{-j\Theta} - Ge^{-j(\Theta-\hat\beta)} - e^{-j(\Theta-\hat d)} - Ge^{-j(\Theta-\hat s-\hat\beta)} + G$
\end{itemize}

All quantities are elementary functions of the switching parameters and can be evaluated without any iterative computation.

\subsection{State-plane trajectories for Cases 1 and 2}

Figure~\ref{fig:sp_6panel} shows the exact state-plane trajectories for both switching pattern cases across buck ($G<1$), unity gain ($G=1$), and boost ($G>1$) operating points, all computed at normalized frequency $F_n = 1.3$. Each colored arc segment corresponds to one switching interval with its center marked on the real axis. The positive half-cycle is drawn with full opacity, the negative half-cycle (point-symmetric) with reduced opacity. The gray dashed ellipse represents the FHA trajectory for comparison.

\begin{figure}[h]
\centering
% this figure is generated by sim/state_plane_trajectories_6panel.py script
\includegraphics[width=\textwidth]{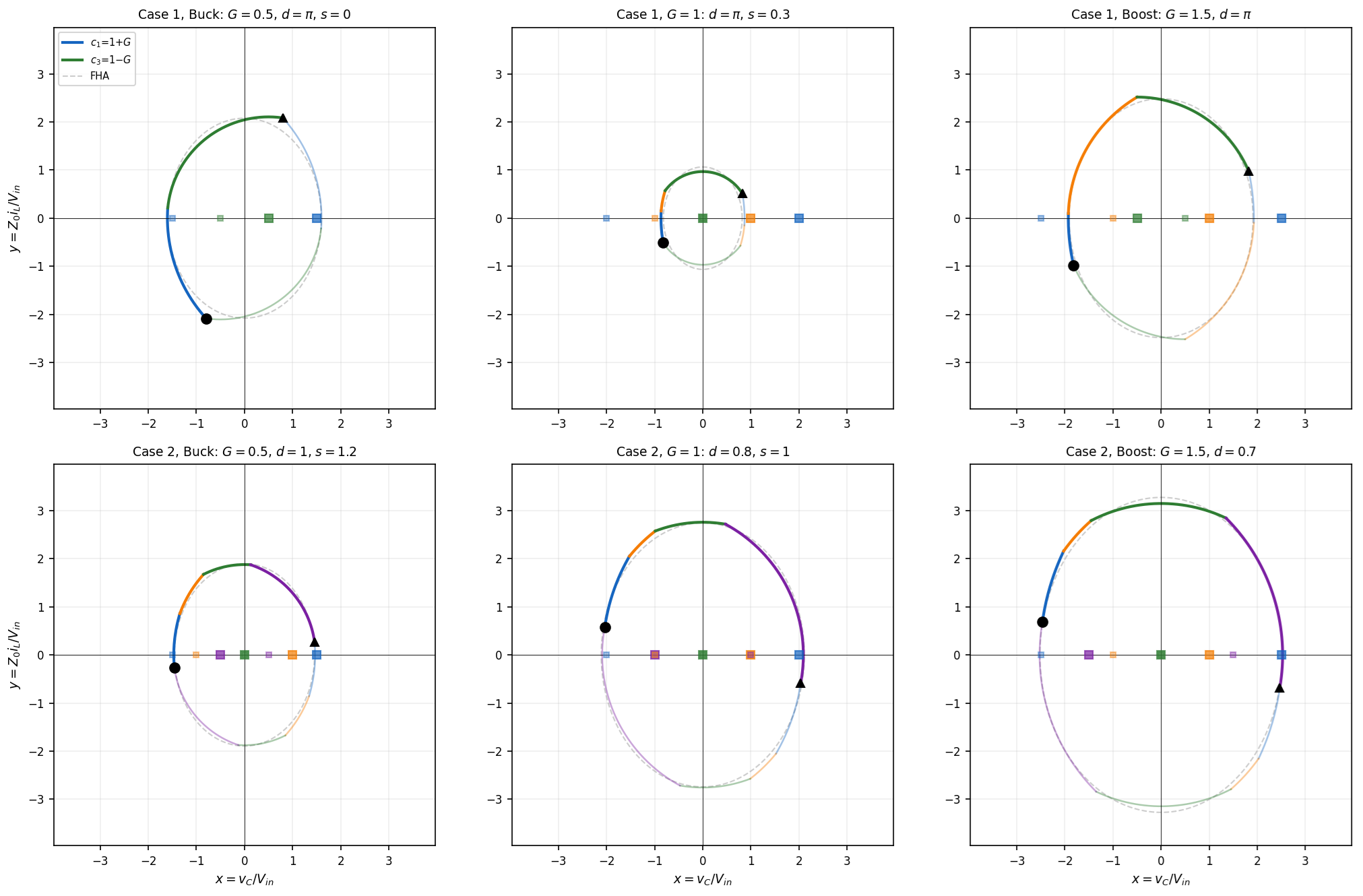}
\caption{State-plane trajectories at $F_n = 1.3$. Top row: Case~1 ($d \ge s+\beta$) with $d=\pi$. Bottom row: Case~2 ($d < s+\beta$). Columns: buck ($G=0.5$), unity ($G=1$), boost ($G=1.5$). Colors: $I_1$ (blue), $I_2$ (orange), $I_3$ (green), $I_4$ (purple). Gray dashed line: FHA ellipse.}
\label{fig:sp_6panel}
\end{figure}

\subsection{Exact zero-crossing angle} Within interval $k$, the state evolves as $z(\theta) = c_k + (z_{k-1} - c_k)e^{-j(\theta-\Theta_{k-1})}$, so:

\begin{equation}
\mathrm{Im}(z(\theta)) = \mathrm{Im}\!\left((z_{k-1}-c_k)e^{-j\tau}\right) = 0
\end{equation}

where $\tau = \theta - \Theta_{k-1} \in [0, \Delta\theta_k]$. Writing $w_k = z_{k-1} - c_k = |w_k|\,e^{j\phi_k}$:

\begin{equation}
|w_k|\sin(\phi_k - \tau) = 0 \quad \Longrightarrow \quad \tau = \phi_k + m\pi
\end{equation}

The derivative $\frac{d}{d\tau}\mathrm{Im}(z) = -|w_k|\cos(\phi_k - \tau)$. At $\tau = \phi_k$: derivative is $-|w_k| < 0$ (negative-going). At $\tau = \phi_k + \pi$: derivative is $+|w_k| > 0$ (positive-going). Therefore, the positive-going zero crossing is:

\begin{equation}\label{eq:sp_zero_crossing}
\tau_{\text{zero}} = \mathrm{arg}(z_{k-1} - c_k) + \pi
\end{equation}

provided this falls within $[0, \Delta\theta_k]$. The exact commutation angle is:
\begin{equation}\label{eq:sp_sigma_exact}
\boxed{\sigma_{\text{exact}} = F_n \cdot (\Theta_{k-1} + \tau_{\text{zero}}) = F_n\left(\Theta_{k-1} + \mathrm{arg}(z_{k-1} - c_k) + \pi\right)}
\end{equation}

where $k$ is the interval containing the crossing, and the intermediate states $z_{k-1}$ are computed by successive application of $z_k = c_k + (z_{k-1}-c_k)e^{-j\Delta\theta_k}$ starting from $z_0 = \mathcal{C}/(1+e^{-j\Theta})$. The offset $\mathcal{C}$ is the same quantity used for computing $W$, given by \eqref{eq:sp_C_case1_simple} for Case~1 or \eqref{eq:sp_C_case2_simple} for Case~2.

For typical operating conditions with the feedforward control law ($\sigma^* > 0$, $\delta^* \ge 0$), the zero crossing falls in interval 1 (where $c_1 = 1+G$), giving the simplest form:

\begin{equation}\label{eq:sp_sigma_interval1}
\sigma_{\text{exact}} = F_n \cdot \left(\mathrm{arg}\!\left(\frac{\mathcal{C}}{1+e^{-j\Theta}} - (1{+}G)\right) + \pi\right) \quad \text{(when crossing is in interval 1)}
\end{equation}

\subsection{Exact synchronous rectification condition}

In the original FHA-based analysis, the synchronous rectification condition $\delta = 0$ (secondary bridge switches exactly at the current zero crossing) yields the algebraic constraint $\cos\beta - G - \cos(\beta-d) - G\cos s = 0$. The state-plane framework provides the \textit{exact} version of this condition without harmonic truncation.

\subsubsection{General form}

The condition $\delta = 0$ means that $\mathrm{Im}(z) = 0$ (current crosses zero) exactly at $\theta = \hat\beta$ (the end of interval 1, where the secondary bridge transitions). Since $z_1 = c_1 + (z_0 - c_1)e^{-j\hat\beta}$:

\begin{equation}\label{eq:sp_sync_rect_general}
\boxed{\mathrm{Im}(z_1) = \mathrm{Im}\!\left((z_0 - c_1)\,e^{-j\hat\beta}\right) = 0}
\end{equation}

where $c_1 = 1+G$ and $z_0 = \mathcal{C}/(1+e^{-j\Theta})$. This condition is equivalent to:

\begin{equation}\label{eq:sp_sync_rect_arg}
\mathrm{arg}(z_0 - (1{+}G)) + \pi = \hat\beta
\end{equation}

which states that the first arc must subtend exactly angle $\hat\beta$ to bring the trajectory to the real axis.

Substituting $z_0$:

\begin{equation}\label{eq:sp_sync_rect_expanded}
\mathrm{Im}\!\left(\left(\frac{\mathcal{C}}{1+e^{-j\Theta}} - (1{+}G)\right) e^{-j\hat\beta}\right) = 0
\end{equation}

Writing $\frac{1}{1+e^{-j\Theta}} = \frac{e^{j\Theta/2}}{2\cos(\Theta/2)}$ and separating real and imaginary parts, this becomes a single transcendental equation relating $(d, s, \beta, G, F_n)$.

For both cases 1 and 2, the condition \eqref{eq:sp_sync_rect_expanded} takes the same form after substitution:

\begin{equation}\label{eq:sp_sync_rect_cases}
\mathrm{Im}\!\left(\left[\mathcal{C}(\hat\beta) - (1{+}G)(1+e^{-j\Theta})\right] e^{j(\Theta/2 - \hat\beta)}\right) = 0
\end{equation}

where $\hat\beta = \beta/F_n$, and $\mathcal{C}(\beta)$ is taken from \eqref{eq:sp_C_case1_simple} if $d \ge s + \beta$ (Case~1) or from \eqref{eq:sp_C_case2_simple} if $d < s + \beta$ (Case~2). In both cases, $\mathcal{C}$ depends on $\beta$ through the normalized angle $\hat\beta$ appearing in the exponentials, making \eqref{eq:sp_sync_rect_cases} an implicit transcendental equation in $\beta$. It is smooth and monotone, ensuring efficient numerical solution.

\subsubsection{Specialization: $d = \pi$, $s = 0$}

For the fully-driven converter without secondary shorting, the offset reduces to $\mathcal{C} = (1{+}G)e^{-j\Theta} - 2Ge^{-j(\Theta-\hat\beta)} - (1{-}G)$ (two intervals only). The numerator $\mathcal{N} = \mathcal{C} - (1{+}G)(1+e^{-j\Theta})$ simplifies to:

\begin{equation}
\mathcal{N} = -2G\,e^{-j(\Theta-\hat\beta)} - 2
\end{equation}

Applying the condition $\mathrm{Im}(\mathcal{N} \cdot e^{j(\Theta/2-\hat\beta)}) = 0$:

\begin{equation}
\mathrm{Im}\!\left(-2G\,e^{-j\Theta/2} - 2\,e^{j(\Theta/2-\hat\beta)}\right) = 0
\end{equation}

which gives $2G\sin(\Theta/2) - 2\sin(\Theta/2 - \hat\beta) = 0$, or:

\begin{equation}\label{eq:sp_sync_rect_d_pi_s0}
G\sin\frac{\Theta}{2} = \sin\!\left(\frac{\Theta}{2} - \hat\beta\right)
\end{equation}

Solving for $\hat\beta$:

\begin{equation}\label{eq:sp_sync_rect_d_pi_s0_solved}
\boxed{\hat\beta = \frac{\Theta}{2} - \arcsin\!\left(G\sin\frac{\Theta}{2}\right)}
\end{equation}

or equivalently $\beta = F_n\hat\beta = \frac{\pi}{2} - F_n\arcsin\!\left(G\sin\frac{\pi}{2F_n}\right)$. This is the \textit{exact closed-form} solution for the synchronous rectification phase shift, valid for all $F_n$ and $G \le 1/\sin(\Theta/2)$.

This is the exact synchronous rectification condition for the fully-driven converter with synchronous secondary ($d=\pi$, $s=0$), expressed as a single compact equation.

\subsubsection{Comparison with FHA condition and physical interpretation}

The FHA synchronous rectification condition for $d = \pi$, $s = 0$ is:
\begin{equation}\label{eq:sp_FHA_sync_rect}
\cos\beta = G \quad \Longleftrightarrow \quad \beta = \arccos G
\end{equation}

The exact state-plane condition \eqref{eq:sp_sync_rect_d_pi_s0} and the FHA condition \eqref{eq:sp_FHA_sync_rect} express the \textit{same physical requirement}: the tank current crosses zero precisely at the moment the secondary bridge transitions. The difference is:

\begin{itemize}
\item \textbf{FHA}: the \textit{fundamental component} of the tank current crosses zero at $\theta = \beta$. This is equivalent to requiring $\sigma_{FHA} = \mathrm{atan2}(B,A) = \beta$.

\item \textbf{State-plane}: the \textit{actual} tank current (including all harmonics) crosses zero at $\theta = \beta$. This is the physical condition $i_L(t_\beta) = 0$.
\end{itemize}

Since the actual current contains higher harmonics that shift the zero-crossing relative to the fundamental, the two conditions yield different values of $\beta$. For $G = 0.7$:

\begin{center}
\begin{tabular}{c|c|c}
$F_n$ & $\beta_{\text{exact}}$ & $\beta_{FHA} = \arccos G$ \\
\hline
1.05 & 0.760 & 0.795 \\
1.2 & 0.680 & 0.795 \\
1.5 & 0.594 & 0.795 \\
2.0 & 0.535 & 0.795 \\
5.0 & 0.481 & 0.795 \\
\end{tabular}
\end{center}

The discrepancy grows with $F_n$, reaching $18^\circ$ at $F_n = 5$. This is a consequence of the harmonic content: at higher switching frequencies, the ratio $\omega/\omega_0$ increases, but the arc structure becomes more pronounced (shorter arcs with larger angular steps between centers), causing the zero-crossing to shift further from the fundamental's prediction.

Conversely, near resonance ($F_n \to 1$), the exact condition approaches the FHA because the LC tank filters out harmonics more effectively and the trajectory becomes nearly elliptical.

The practical implication: if a controller uses $\beta = \arccos G$ (the FHA prescription) to achieve synchronous rectification, the actual zero-crossing will be offset from the switching edge by $\delta_{\text{actual}} = \beta - \sigma_{\text{exact}} \ne 0$. The exact condition \eqref{eq:sp_sync_rect_d_pi_s0} provides the corrected $\beta$ for true zero-crossing alignment.

\subsubsection{Exact inversion of commutation parameters}

The FHA-based feedforward control inverts $\sigma(d,s,\beta,G) = \sigma^*$, $\delta(d,s,\beta,G) = \delta^*$ to find switching parameters. The state-plane framework yields \textit{exact closed-form} inversions.

\paragraph{Problem formulation.} Given desired $\sigma^*$, $\delta^*$, voltage gain $G$, and $F_n$, find $d$, $s$, $\beta$ such that:
\begin{equation}\label{eq:sp_inversion_problem}
\sigma_{\text{exact}}(d, s, \beta, G, F_n) = \sigma^*, \quad \delta_{\text{exact}} = \beta - \sigma^* = \delta^*
\end{equation}

\paragraph{Step 1: $\beta$ is determined directly.} Since $\delta = \beta - \sigma$ by definition:
\begin{equation}\label{eq:sp_inv_beta}
\beta = \sigma^* + \delta^*
\end{equation}

\paragraph{Step 2: Closed-form solution for $d$ or $s$.}

The zero-crossing condition requires $\mathrm{Im}((z_0 - c_1)e^{-j\hat\sigma}) = 0$, i.e., the current crosses zero at resonant-time angle $\hat\sigma = \sigma^*/F_n$ on the first arc (center $c_1 = 1+G$). Since $\delta^* \ge 0$ implies $\hat\sigma \le \hat\beta$, the crossing always falls within interval~1.

Defining $\mathcal{N} = \mathcal{C} - (1{+}G)(1+e^{-j\Theta})$ and using $z_0 = \mathcal{C}/(1+e^{-j\Theta})$, the condition becomes:
\begin{equation}\label{eq:sp_inv_general_condition}
\mathrm{Im}\!\left(\mathcal{N}\,e^{j(\Theta/2-\hat\sigma)}\right) = 0
\end{equation}

\paragraph{Buck mode ($s = 0$, solve for $d$).} For $s = 0$, the numerator is $\mathcal{N} = -2Ge^{-j(\Theta-\hat\beta)} - e^{-j(\Theta-\hat d)} - 1$. Expanding \eqref{eq:sp_inv_general_condition} term by term and substituting $\hat\beta = \hat\sigma + \hat\delta$:

\begin{equation}
2G\sin(\Theta/2 - \hat\sigma - \Theta + \hat\beta) - \sin(\Theta/2 - \hat\sigma - \Theta + \hat d) - \sin(\Theta/2 - \hat\sigma) = 0
\end{equation}

Simplifying the arguments (using $\hat\beta = (\sigma^*+\delta^*)/F_n = \hat\sigma + \hat\delta$):
\begin{equation}
2G\sin(\hat\delta - \Theta/2) - \sin(\hat d - \hat\sigma - \Theta/2) - \sin(\Theta/2 - \hat\sigma) = 0
\end{equation}

Solving for $\hat d$:
\begin{equation}
\sin(\hat d - \hat\sigma - \Theta/2) = 2G\sin(\hat\delta - \Theta/2) - \sin(\Theta/2 - \hat\sigma)
\end{equation}

\begin{equation}\label{eq:sp_inv_d_closed}
\boxed{\hat d = \hat\sigma + \frac{\Theta}{2} + \arcsin\!\left(-2G\sin\!\left(\hat\delta - \frac{\Theta}{2}\right) - \sin\!\left(\frac{\Theta}{2} - \hat\sigma\right)\right)}
\end{equation}

where $\hat\sigma = \sigma^*/F_n$, $\hat\delta = \delta^*/F_n$, $\Theta = \pi/F_n$. A solution exists (buck mode valid) when the $\arcsin$ argument is in $[-1,1]$ and $d = F_n\hat d \le \pi$.

\paragraph{Boost mode ($d = \pi$, solve for $s$).} When the buck formula yields $d > \pi$, the boost mode applies. With $d = \pi$ (3 intervals: $(1{+}G,\beta)$, $(1,s)$, $(1{-}G,\pi{-}s{-}\beta)$), the numerator is $\mathcal{N} = -Ge^{-j(\Theta-\hat\beta)} - Ge^{-j(\Theta-\hat\beta-\hat s)} - 2$. Expanding \eqref{eq:sp_inv_general_condition} term by term with $\hat\beta = \hat\sigma+\hat\delta$:

\begin{equation}
-G\sin(\hat\delta - \Theta/2) - G\sin(\hat\delta + \hat s - \Theta/2) - 2\sin(\Theta/2 - \hat\sigma) = 0
\end{equation}

Using $-\sin(\hat\delta - \Theta/2) = \sin(\Theta/2-\hat\delta)$:

\begin{equation}
G\sin(\Theta/2-\hat\delta) - G\sin(\hat\delta + \hat s - \Theta/2) = 2\sin(\Theta/2 - \hat\sigma)
\end{equation}

Isolating the $\hat s$-dependent term:

\begin{equation}
\sin(\hat\delta + \hat s - \Theta/2) = \sin(\Theta/2-\hat\delta) - \frac{2}{G}\sin(\Theta/2-\hat\sigma)
\end{equation}

Solving for $\hat s$:
\begin{equation}\label{eq:sp_inv_s_closed}
\boxed{\hat s = \frac{\Theta}{2} - \hat\delta + \arcsin\!\left(\sin\!\left(\frac{\Theta}{2}-\hat\delta\right) - \frac{2}{G}\sin\!\left(\frac{\Theta}{2}-\hat\sigma\right)\right)}
\end{equation}

\paragraph{Verification.} Table~\ref{tab:inversion} compares the exact closed-form inversion with the FHA inversion across representative operating points. The column ``$\sigma$ actual'' shows the true $\sigma$ (from the exact state-plane model) when using the respective $d$ or $s$ value.

\begin{table}[h]
\centering
\caption{Inversion comparison: exact formula vs.\ FHA}
\label{tab:inversion}
\small
\begin{tabular}{cccc|cc|cc}
\toprule
$\sigma^*$ & $\delta^*$ & $G$ & $F_n$ & $d_{\text{exact}}$ & $\sigma$ actual & $d_{FHA}$ & $\sigma$ actual \\
\midrule
\multicolumn{8}{c}{\textit{Buck mode ($s=0$, solve for $d$)}} \\
\midrule
0.1 & 0.0 & 0.7 & 1.50 & 2.258 & 0.100 & 2.088 & 0.028 \\
0.2 & 0.0 & 0.7 & 1.50 & 2.422 & 0.200 & 2.204 & 0.110 \\
0.1 & 0.1 & 0.7 & 1.50 & 2.179 & 0.100 & 2.080 & 0.059 \\
0.3 & 0.0 & 0.5 & 1.30 & 2.009 & 0.300 & 1.916 & 0.251 \\
0.1 & 0.0 & 0.7 & 1.05 & 2.115 & 0.100 & 2.088 & 0.079 \\
0.2 & 0.1 & 0.6 & 1.20 & 2.037 & 0.200 & 1.986 & 0.173 \\
\midrule
& & & & $s_{\text{exact}}$ & $\sigma$ actual & $s_{FHA}$ & $\sigma$ actual \\
\midrule
\multicolumn{8}{c}{\textit{Boost mode ($d=\pi$, solve for $s$)}} \\
\midrule
0.1 & 0.0 & 1.3 & 1.50 & 0.934 & 0.100 & 1.011 & 0.156 \\
0.1 & 0.0 & 1.5 & 1.20 & 1.219 & 0.100 & 1.238 & 0.116 \\
0.2 & 0.0 & 1.3 & 1.30 & 1.032 & 0.200 & 1.038 & 0.205 \\
0.1 & 0.05 & 1.2 & 1.50 & 0.673 & 0.100 & 0.801 & 0.147 \\
\bottomrule
\end{tabular}
\end{table}

The exact formulas achieve the target $\sigma^*$ precisely in all cases, while the FHA inversion produces errors ranging from 2\% (near resonance, boost) to 72\% (higher $F_n$, buck). This is expected: the exact formulas explicitly depend on the normalized frequency $F_n$ through $\Theta = \pi/F_n$, whereas the FHA formulas are frequency-independent. Near resonance ($F_n \to 1$, $\Theta \to \pi$) the arc angles are large and the multi-arc trajectory closely resembles an ellipse, so both models agree. Far from resonance ($F_n \gg 1$, $\Theta \to 0$) the arcs become short and the harmonic distortion grows, making the FHA inversion increasingly inaccurate.

\paragraph{Comparison with FHA inversion.} The FHA inversion yields $d = \arccos(\cos\sigma^* - 2G\cos\delta^*) + \sigma^*$ (buck) and $s = \arccos(2\cos\sigma^*/G - \cos\delta^*) - \delta^*$ (boost). Placing the formulas side by side for the buck case:

\begin{equation}
\begin{array}{r@{\;=\;}l@{\qquad}l}
\hat d_{\text{exact}} & \hat\sigma + \dfrac{\Theta}{2} + \arcsin\!\left(-2G\sin\!\left(\hat\delta - \dfrac{\Theta}{2}\right) - \sin\!\left(\dfrac{\Theta}{2} - \hat\sigma\right)\right) & \text{(exact)} \\[10pt]
d_{FHA} & \sigma^* + \arccos\!\left(\cos\sigma^* - 2G\cos\delta^*\right) & \text{(FHA)}
\end{array}
\end{equation}

At resonance ($F_n = 1$, $\Theta = \pi$), all hatted quantities equal the switching angles ($\hat\sigma = \sigma^*$, $\hat\delta = \delta^*$, $\hat d = d$) and the exact formula becomes:
\begin{equation}
d = \sigma^* + \frac{\pi}{2} + \arcsin(2G\cos\delta^* - \cos\sigma^*)
\end{equation}

Using the identity $\arccos(x) = \pi/2 - \arcsin(x)$, the FHA formula rewrites as:
\begin{equation}
d = \sigma^* + \frac{\pi}{2} - \arcsin(\cos\sigma^* - 2G\cos\delta^*) = \sigma^* + \frac{\pi}{2} + \arcsin(2G\cos\delta^* - \cos\sigma^*)
\end{equation}

These are \textbf{identical}. The same equivalence holds for the boost formula. This proves that \textit{the FHA inversion is the exact state-plane inversion evaluated at resonance} ($F_n = 1$). The departure from FHA for $F_n > 1$ is entirely captured by the parameter $\Theta = \pi/F_n$, which compresses the trigonometric arguments as the switching frequency increases above resonance.

Figure~\ref{fig:q_comparison} illustrates this relationship using the aggregated switching parameter $q$, which combines both modes into a single continuous variable:

\begin{equation}\label{eq:sp_q_def}
q = \begin{cases} d & \text{buck mode } (d \le \pi) \\ s + \pi & \text{boost mode } (d = \pi) \end{cases}
\end{equation}

The FHA curve (black, frequency-independent) coincides with the $F_n = 1.0$ curve, confirming the resonance equivalence. As $F_n$ increases, the exact $q(G)$ curves depart progressively from the FHA, with the largest deviation near the buck-to-boost transition ($G \approx 1$).

\begin{figure}[h]
\centering
% this figure is generated by sim/plot_q_comparison.py script
\includegraphics[width=0.85\textwidth]{SP_q_comparison.png}
\caption{Aggregated switching parameter $q(G)$ from exact state-plane inversion (dashed, for $F_n = 1.0, 1.3, 2.0, 3.0$) vs.\ FHA inversion (solid black, frequency-independent). Parameters: $\sigma^* = 0.1$, $\delta^* = 0$. At $F_n = 1$ (resonance) the exact and FHA curves coincide.}
\label{fig:q_comparison}
\end{figure}

\subsubsection{Low-power mode: inversion with $s_{add}$}

When the switching frequency reaches its maximum $\omega_{max}$ (i.e., $F_n = F_{n,max}$ is fixed), further output power reduction requires introducing additional secondary shorting $s_{add} > 0$. Following the original FHA-based approach, the aggregated variable $q$ with $s_{add}$ is defined as:

\begin{equation}\label{eq:sp_q_s_add}
\begin{pmatrix} d \\ s \end{pmatrix} = \begin{cases} \begin{pmatrix} q \\ s_{add} \end{pmatrix} & q \le \pi \quad \text{(buck)} \\ \begin{pmatrix} \pi \\ q - \pi \end{pmatrix} & q > \pi \quad \text{(boost)} \end{cases}
\end{equation}

In buck mode, $d = q$ and $s = s_{add}$; in boost mode, $d = \pi$ and $s = q - \pi$ (with $s_{add}$ absorbed into $s$).

\paragraph{Generalized buck inversion with $s_{add}$.} For the buck case with $s = s_{add} > 0$, the numerator $\mathcal{N} = -Ge^{-j(\Theta-\hat\beta)} - Ge^{-j(\Theta-\hat\beta-\hat s_{add})} - e^{-j(\Theta-\hat d)} - 1$ contains one $\hat d$-dependent term. Expanding $\mathrm{Im}(\mathcal{N}\,e^{j(\Theta/2-\hat\sigma)}) = 0$ and isolating the $\hat d$-dependent sine:

\begin{equation}
\sin(\hat d - \hat\sigma - \Theta/2) = -G\sin(\hat\delta - \Theta/2) - G\sin(\hat\delta + \hat s_{add} - \Theta/2) - \sin(\Theta/2 - \hat\sigma)
\end{equation}

Solving:

\begin{equation}\label{eq:sp_inv_d_s_add}
\boxed{\hat d = \hat\sigma + \frac{\Theta}{2} + \arcsin\!\left(-G\sin\!\left(\hat\delta - \frac{\Theta}{2}\right) - G\sin\!\left(\hat\delta + \hat s_{add} - \frac{\Theta}{2}\right) - \sin\!\left(\frac{\Theta}{2} - \hat\sigma\right)\right)}
\end{equation}

where $\hat s_{add} = s_{add}/F_n$. This generalizes the buck formula \eqref{eq:sp_inv_d_closed} (which is recovered for $s_{add} = 0$ since $-G\sin x - G\sin x = -2G\sin x$).

\paragraph{Low-power control algorithm.} The output transconductance at fixed $\omega = \omega_{max}$ is:

\begin{equation}
W(s_{add}) = \frac{n}{Z_0\Theta}\,\mathrm{Re}\!\left(\frac{\mathcal{C}(d(s_{add}),\, s_{add},\, \beta)\cdot\mathcal{A}}{1+e^{-j\Theta}} + \mathcal{B}\right)
\end{equation}

where $d(s_{add})$ is given by \eqref{eq:sp_inv_d_s_add} for the buck case (or $d = \pi$ with $s$ from \eqref{eq:sp_inv_s_closed} for boost). The algorithm for achieving a target $W^*$ below the minimum power at $s_{add} = 0$ is:

\begin{enumerate}
\item Compute $s_{add}^0$ (the minimal $s_{add}$ for monotonic $W$ decrease) from $W(s_{add}^0) = W(0)$.
\item For $s_{add} > s_{add}^0$, bisect on $s_{add}$ to satisfy $W(s_{add}) = W^*$.
\end{enumerate}

Each evaluation of $W(s_{add})$ uses the closed-form inversion \eqref{eq:sp_inv_d_s_add} to compute $d$, followed by the closed-form $W$ from \eqref{eq:sp_W_final_closed}. No iterative differential equation solving is required.

\paragraph{Control architecture.} The overall control system mirrors the parallel nonlinear compensation architecture from the original FHA-based design. The feedforward maps for commutation angles ($q$ and $\beta$ from \eqref{eq:sp_inv_d_s_add}, \eqref{eq:sp_inv_s_closed}, \eqref{eq:sp_inv_beta}) serve as the inner-loop nonlinear compensation, with external PI controllers adding corrections to $q$ and $\beta$ for robustness against model uncertainties. The outer-loop series compensation for output power uses the look-up table (or bisection) mapping $W^*$ to frequency $\omega$ and $s_{add}$.

The only structural difference from the FHA architecture is that the commutation feedforward maps \eqref{eq:sp_inv_d_s_add}--\eqref{eq:sp_inv_s_closed} operate on \textit{normalized resonant-time quantities} $\hat d = d/F_n$, $\hat s = s/F_n$, $\hat\beta = \beta/F_n$ rather than switching-frequency angles. Since $\hat x = x/F_n = x/(\omega/\omega_0)$, these hatted quantities have the physical dimension of \textit{time} (in units of $1/\omega_0$), not angles. This means the feedforward maps and the controllers could equivalently operate in time domain — controlling timing intervals between switching edges and current zero crossings rather than phase angles. The dependence on switching frequency enters naturally: all timing quantities scale with $1/F_n$, coupling the inner commutation loop to the outer frequency loop. This coupling reflects the physically correct dependency captured by the state-plane model.

\subsection{Recovery of First Harmonic Approximation}\label{sec:sp_FHA_recovery}

The FHA emerges as the leading-order approximation of the exact state-plane solution. In this section we provide a geometric interpretation of this relationship.

\subsubsection{Geometric view: single-arc approximation}

The exact state-plane trajectory consists of $N$ circular arcs, each with a different center $c_k$. The FHA is geometrically equivalent to replacing all $N$ arcs by a \textit{single arc} with an effective center $c_{\text{eff}}$.

For a single arc centered at $c_{\text{eff}}$ spanning the full half-period angle $\Theta$, the half-wave symmetry condition gives:
\begin{equation}
z_0^{\text{single}} = c_{\text{eff}} \cdot \frac{1-e^{-j\Theta}}{1+e^{-j\Theta}} = j\,c_{\text{eff}}\tan\frac{\Theta}{2}
\end{equation}

Comparing with the exact solution $z_0 = \mathcal{C}/(1+e^{-j\Theta})$, the effective center that matches the initial state is:
\begin{equation}\label{eq:sp_c_eff}
c_{\text{eff}} = \frac{\mathcal{C}}{1-e^{-j\Theta}}
\end{equation}

In general $c_{\text{eff}}$ is \textit{complex-valued}, meaning no single real-centered arc can perfectly reproduce the exact initial state. The imaginary part $\mathrm{Im}(c_{\text{eff}})$ measures the irreducible deviation of the multi-arc trajectory from any single-arc approximation.

\subsubsection{Curvature interpretation}

Each arc segment $k$ has radius $R_k = |z_{k-1} - c_k|$ and curvature $\kappa_k = 1/R_k$. The FHA trajectory (an ellipse) has continuously varying curvature. The key geometric difference:

\begin{itemize}
\item \textbf{Exact trajectory}: piecewise-constant curvature with discontinuous jumps at switching instants. Each jump in curvature is proportional to the voltage step $|c_{k+1} - c_k|$ divided by the local radius.

\item \textbf{FHA ellipse}: smooth curvature varying between $F_n/Y_1$ (at current peaks) and $F_n^2/Y_1$ (at voltage peaks), where $Y_1$ is the fundamental current amplitude.
\end{itemize}

The FHA becomes exact when all arc centers coincide ($c_1 = c_2 = \ldots = c_N$), which corresponds to constant applied voltage (no switching) — a degenerate case. The FHA error is fundamentally related to the \textit{spread of arc centers} relative to the trajectory size.

\subsubsection{Fourier interpretation}

The applied voltage $V_n(\theta)$ has Fourier harmonics $V_k$ at frequencies $k\omega$. Each harmonic excites a current component at the same frequency with amplitude $Y_k = V_k/|Z(k\omega)|$. In the state plane:

\begin{itemize}
\item The fundamental ($k=1$) produces the FHA ellipse with semi-axes $Y_1/F_n$ (voltage) and $Y_1$ (current).
\item The $k$-th harmonic produces a smaller ellipse with semi-axes $Y_k/(kF_n)$ and $Y_k$, rotating $k$ times faster.
\item The exact trajectory is the \textit{superposition} of all harmonic ellipses.
\end{itemize}

The multi-arc piecewise-circular structure is therefore the result of adding infinitely many nested ellipses of decreasing size — the FHA retains only the largest one.

\subsubsection{Error bound}

The relative discrepancy between $W_{\text{exact}}$ and $W_{FHA}$ is bounded by the higher-harmonic current contributions:
\begin{equation}\label{eq:sp_error_bound}
\frac{|W_{\text{exact}} - W_{FHA}|}{W_{\text{exact}}} \lesssim \sum_{k=3,5,\ldots} \frac{|V_k|}{|V_1|} \cdot \frac{|F_n - 1/F_n|}{|kF_n - 1/(kF_n)|}
\end{equation}

The first factor is the voltage harmonic ratio ($\le 1/k$ for half-wave symmetric waveforms). The second is the impedance attenuation. For above-resonance operation, the $k$-th harmonic current is suppressed by the LC tank impedance as $\sim 1/(k \cdot kF_n) = 1/k^2 F_n$, giving rapid convergence of the series.

Note that while the \textit{pointwise} trajectory deviation between exact and FHA can be significant (several percent of the current amplitude at switching instants), the \textit{integrated} quantities ($W$, $\sigma$) benefit from cancellation of positive and negative deviations, resulting in much smaller errors in practice ($<2\%$ for typical operating conditions).

\subsection{Summary of exact analytical results}

\begin{theorem}[Exact Steady-State of DB SRC]\label{thm:exact_ss}
For a dual-bridge series resonant converter with switching parameters $(d, s, \beta)$, normalized frequency $F_n = \omega/\omega_0 \ne \text{integer}$, and voltage gain $G$:

\textbf{1. Switching pattern:} Determine intervals $(c_k, \alpha_k)_{k=1}^N$ from \eqref{eq:sp_case1} or \eqref{eq:sp_case2}. Set $\Delta\theta_k = \alpha_k/F_n$.

\textbf{2. Offset computation:} Calculate $\mathcal{C}$ by the recurrence:
\begin{equation}
\mathcal{C}_0 = 0, \quad \mathcal{C}_k = (\mathcal{C}_{k-1} + c_k)e^{-j\Delta\theta_k} - c_k
\end{equation}

\textbf{3. Initial state:}
\begin{equation}
z_0 = x_0 + jy_0 = \frac{\mathcal{C}}{1 + e^{-j\Theta}}, \quad \Theta = \pi/F_n
\end{equation}

\textbf{4. Exact transconductance:}
\begin{equation}
W = \frac{n}{Z_0\Theta}\,\mathrm{Re}\!\left(\frac{\mathcal{C}\,\mathcal{A}}{1+e^{-j\Theta}} + \mathcal{B}\right)
\end{equation}

where $\mathcal{A}$, $\mathcal{B}$ are defined in \eqref{eq:sp_AB_coeff}.

\textbf{5. Exact zero crossing:}
\begin{equation}
\sigma_{\text{exact}} = F_n\left(\Theta_{k-1} + \mathrm{arg}(z_{k-1}-c_k) + \pi\right)
\end{equation}

in the interval $k$ containing the positive-going zero crossing.

\textbf{6. FHA as special case:} The first harmonic approximation \eqref{eq:FHA_AB}--\eqref{eq:FHA_model} corresponds to replacing the multi-arc trajectory by a single effective arc, with error bounded by \eqref{eq:sp_error_bound}.
\end{theorem}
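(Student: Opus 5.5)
The plan is to prove items 1--5 by direct verification, building each on the one before, and to treat item~6 separately as a structural statement. Item~1 is just the interval tables \eqref{eq:sp_case1}--\eqref{eq:sp_case2}, read off from \eqref{eq:v_in}--\eqref{eq:v_out} by ordering the switching edges $0,\beta,\beta+s,d$ within $[0,\pi]$. The two cases are split by the sign of $d-s-\beta$, and zero-length intervals are dropped.

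For items~2--3, I start from the arc solution \eqref{eq:sp_complex_solution}. Each interval is the affine map
\begin{equation*}
z\mapsto e^{-j\Delta\theta_k}z+c_k\bigl(1-e^{-j\Delta\theta_k}\bigr).
\end{equation*}
By induction on $k$, the composition of the first $k$ maps has the form $z\mapsto e^{-j\Theta_k}z+b_k$, where
\begin{equation*}
b_k=b_{k-1}e^{-j\Delta\theta_k}+c_k\bigl(1-e^{-j\Delta\theta_k}\bigr).
\end{equation*}
Comparing this with \eqref{eq:sp_recurrence} shows $\mathcal C_k=-b_k$. This yields $z_k=e^{-j\Theta_k}z_0-\mathcal C_k$, which is \eqref{eq:sp_partial_state}, and I will state this sign convention explicitly, since the main-text formula \eqref{eq:composition} carries the opposite sign. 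Imposing half-wave symmetry \eqref{eq:sp_half_wave_symmetry} then gives the linear equation $z_0(1+e^{-j\Theta})=\mathcal C$. Its unique solution is $z_0=\mathcal C/(1+e^{-j\Theta})$, valid provided $\cos(\Theta/2)\neq0$, i.e.\ $F_n\neq 1/(2m+1)$. I will state precisely this as the nondegeneracy hypothesis that is actually needed, in place of the stated ``$F_n\ne$ integer''. I will also remark that at exact resonance $\Theta=\pi$, so the formula holds only as a limit there; this affects how the resonance comparisons elsewhere are read.

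For item~4, I write the output current as the period average of $r(t)\,i_L(t)$ scaled by $n$. Using half-wave symmetry, this reduces to twice the positive-half-cycle integral divided by the full period. In normalized units, $\int y\,d\theta=\Delta x$ on each interval by \eqref{eq:sp_norm_dynamics}, which gives \eqref{eq:sp_W_half_cycle}; I will check the prefactor $n/(Z_0\Theta)$ against the chosen normalization of $V_{in}$ and $Z_0$. I then substitute $\Delta z_k=(z_{k-1}-c_k)(e^{-j\Delta\theta_k}-1)$ and $z_{k-1}=e^{-j\Theta_{k-1}}z_0-\mathcal C_{k-1}$. This makes the sum affine in $z_0$ with exactly the coefficients $\mathcal A,\mathcal B$ of \eqref{eq:sp_AB_coeff}, and inserting $z_0$ from item~3 gives the stated $W$.

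Item~5 follows from writing $z_{k-1}-c_k=|w_k|e^{j\phi_k}$. Then $\mathrm{Im}\,z=|w_k|\sin(\phi_k-\tau)$ on interval $k$, and the upward crossing is at $\tau=\phi_k+\pi$ (mod $2\pi$). Converting $\Theta_{k-1}+\tau$ back to switching angle by the factor $F_n$ gives the formula. One caveat must be recorded: $\arg$ must be taken on the branch that places $\tau$ in $[0,\Delta\theta_k]$, and $w_k\neq0$.

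The main obstacle is item~6, which as stated is not a sharp mathematical claim. I would prove only its precise core. Replacing all centers by a single $c_{\text{eff}}$ and solving the same half-wave equation gives $z_0=c_{\text{eff}}(1-e^{-j\Theta})/(1+e^{-j\Theta})$, so matching $z_0$ forces \eqref{eq:sp_c_eff}. Separately, I would take the Fourier-series representation of the steady state and note that the fundamental term reproduces \eqref{eq:FHA_AB}--\eqref{eq:FHA_model}; this is done by computing the first Fourier coefficient of the tank voltage and dividing by $jZ(\omega)$. I would then present \eqref{eq:sp_error_bound} as a heuristic estimate rather than a proven bound, obtained by bounding the harmonic currents $|V_k|/|Z(k\omega)|$ and summing.
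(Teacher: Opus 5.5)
For items 1--5 your argument is the paper's own proof sketch made explicit. The recurrence is the negated affine offset of the composed rotations; your $\mathcal{C}_k=-b_k$ is exactly the appendix remark. Half-wave symmetry gives $(1+e^{-j\Theta})z_0=\mathcal{C}$, $\sum_k r_k\Delta z_k$ is linear in $z_0$, and the upward crossing is at $\arg w_k+\pi$.

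Your nondegeneracy condition $F_n\neq 1/(2m+1)$ is the correct one. The paper's sketch excludes only $F_n=1$, which suffices only under the standing assumption $F_n>1$. The stated hypothesis ``$F_n\neq$ integer'' is simply wrong: it excludes harmless values $F_n=2,3,\dots$ and misses $F_n=1/3,1/5,\dots$. Your remark that the $F_n=1$ comparisons elsewhere hold only as limits is also right.

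The genuine difference is item~6. The paper argues that the arcs approach the FHA curve ``when all $\Delta\theta_k\to0$''. With $N\le4$ and fixed switching angles this means $\Theta\to0$, i.e.\ $F_n\to\infty$. There $x=O(\Theta^2)$, $dy/d\theta\approx c_k$, and the current becomes piecewise linear, which is the regime where FHA is worst, as the paper's own tables show.

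Your Fourier route is the correct precise content, and it checks out:
\begin{itemize}
\item the fundamental coefficients of $v_{in}-nv_{out}$ (normalized by $V_{in}$) are $A/(2\pi)$ and $B/(2\pi)$;
\item dividing by $jZ(\omega)$ gives a current proportional to $\sin(t'-\sigma)$ with $\sigma=\mathrm{atan2}(B,A)$;
\item integrating that current against $r$ over $(\beta+s,\beta+\pi)$ reproduces \eqref{eq:FHA_model}, including the factor $\cos\delta+\cos(s+\delta)$.
\end{itemize}
The asymptotic statement that actually holds is that the harmonic-to-fundamental current ratios vanish as $F_n\to1^{+}$, since $|Z(\omega)|\to0$. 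Treating \eqref{eq:sp_error_bound} as a heuristic rather than a proven bound is appropriate.

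Two corrections are needed.

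First, the ``precise core'' you propose for item~6 carries a sign error inherited from \eqref{eq:sp_c_eff}. Applying your own items~2--3 to a single interval gives $\mathcal{C}=-c_{\text{eff}}(1-e^{-j\Theta})$. Hence $z_0^{\text{single}}=-c_{\text{eff}}(1-e^{-j\Theta})/(1+e^{-j\Theta})=-j\,c_{\text{eff}}\tan(\Theta/2)$. Matching $z_0$ therefore forces $c_{\text{eff}}=-\mathcal{C}/(1-e^{-j\Theta})$, the negative of \eqref{eq:sp_c_eff}. As a sanity check, a square-wave drive of amplitude $c>0$ above resonance must give $y_0<0$ at the rising edge, because the tank is inductive and the current lags the voltage.

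Second, in item~1 the sign of $d-s-\beta$ alone does not decide the pattern. Ordering the edges $0,\beta,\beta+s,d$ yields a third pattern when $d<\beta$, with levels $1+G,\,G,\,0,\,-G$. Case~2 of the table also needs $\beta\le d$ and $\beta+s\le\pi$ for its durations to be nonnegative. State these restrictions as hypotheses rather than claiming the two tables are exhaustive.
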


\begin{proof}[Proof sketch]
\textit{(1)} follows from the definition of bridge switching waveforms in Section~2 and the enumeration of voltage levels $v_{in} - nv_{out}$ across all combinations of bridge states.

\textit{(2)} The recurrence implements the composition $z \mapsto c_k + (z-c_k)e^{-j\Delta\theta_k}$ applied to $z=0$, computing the negated affine offset of the composed map. This is verified by noting that propagating any $z_0$ gives $z_N = e^{-j\Theta}z_0 - \mathcal{C}$ (the total rotation plus accumulated shift).

\textit{(3)} Half-wave symmetry requires $z_N = -z_0$. Substituting $z_N = e^{-j\Theta}z_0 - \mathcal{C}$ gives $(1+e^{-j\Theta})z_0 = \mathcal{C}$. The denominator $1+e^{-j\Theta} = 2\cos(\Theta/2)e^{-j\Theta/2} \ne 0$ for $\Theta \ne \pi$ (i.e., $F_n \ne 1$), guaranteeing existence and uniqueness.

\textit{(4)} The output current integral $\int r(\theta)\,y\,d\theta = \sum r_k(x_k-x_{k-1}) = \sum r_k\,\mathrm{Re}(\Delta z_k)$ is linear in $z_0$ since each $\Delta z_k = (z_{k-1}-c_k)(e^{-j\Delta\theta_k}-1)$ and $z_{k-1} = e^{-j\Theta_{k-1}}z_0 - \mathcal{C}_{k-1}$. Collecting terms linear in $z_0$ gives coefficient $\mathcal{A}$; the remainder is $\mathcal{B}$.

\textit{(5)} Within arc $k$, $\mathrm{Im}(z(\tau)) = |w_k|\sin(\phi_k - \tau)$ where $w_k = z_{k-1}-c_k = |w_k|e^{j\phi_k}$. The positive-going zero ($d[\mathrm{Im}]/d\tau > 0$) occurs at $\tau = \phi_k + \pi = \mathrm{arg}(w_k)+\pi$.

\textit{(6)} When all $\Delta\theta_k \to 0$, the multi-arc trajectory approaches a smooth curve whose shape is determined by the fundamental Fourier component of the applied voltage, as shown in Section~\ref{sec:sp_FHA_recovery}.
\end{proof}

%\begin{figure}[t]
%\input{Freq_Ctrl_parallel.TpX}  % TODO: adapt figure from original paper
%\centering
%\caption{Controller architecture with commutation parameters control $\sigma$ and $\delta$, and output power control $W$.}
%\label{fig:Freq_Ctrl_parallel}
%\end{figure}

\end{document}